\crefname{line}{\text{line}}{\text{lines}} 
\crefname{item}{\text{item}}{\text{items}} 
\crefname{example}{\text{Example}}{\text{Examples}} 
\crefname{assumption}{\text{Assumption}}{\text{Assumptions}} 
\crefname{algorithm}{\text{Algorithm}}{\text{Algorithms}}
\newcommand{\Dom}{\mathcal{D}}
\newcommand{\States}{\mathcal{X}}
\newcommand{\Hyps}{\mathcal{H}}
\newcommand{\HypsPD}{\Hyps_\mathrm{pd}}
\newcommand{\HypsSTB}{\Hyps_\mathrm{dec}}
\newcommand{\Reals}{\mathbb{R}}
\newcommand{\PosReals}{\Reals_{> 0}}
\newcommand{\norm}[1]{\left\lVert #1 \right\rVert}
\newcommand{\INT}{\mathrm{int}}
\newcommand{\ve}{\mathbf{e}}
\newcommand{\vx}{\mathbf{x}}
\newcommand{\vu}{\mathbf{u}}
\newcommand{\vy}{\mathbf{y}}
\newcommand{\vpara}{{\boldsymbol{\uptheta}}}
\newcommand{\cand}{\ensuremath{\Bar{\vpara}}}
\newcommand{\lya}{\ensuremath{V}}
\newcommand{\ctrl}{\ensuremath{\kappa}}
\newcommand{\gradlya}{\ensuremath{\nabla\lya}}
\newcommand{\derlya}{\ensuremath{\gradlya(\vx)}}
\newcommand{\fbb}{\ensuremath{f}}
\newcommand{\liederlya}{\derlya\cdot\fbb(\vx)}
\newcommand{\dist}{\delta}
\newcommand{\diff}{\mu}
\newcommand{\rob}{\gamma}
\newcommand{\lipbb}{\ensuremath{L}}
\newcommand{\lipkk}{\ensuremath{L_{\ctrl}}}
\newcommand{\LieUB}{\ensuremath{\mathit{LieUB}}}
\newcommand{\zero}{\mathbf{0}}
\newcommand{\xj}{{\Bar{\vx}}}
\newcommand{\yj}{{\Bar{\vy}}}
\newcommand{\Cover}{\mathcal{C}}
\newcommand{\Reg}{\mathcal{R}}
\newcommand{\lipreg}{\ensuremath{\lipbb_\Reg}}
\newcommand{\Ri}{\Reg_i}
\newcommand{\Ball}{\mathcal{B}}
\newcommand{\ROI}{\text{ROI}\xspace}
\newcommand{\CEGIS}{\text{CEGIS}\xspace}
\newcommand{\ACCPM}{\text{ACCPM}\xspace}
\newcommand{\CVXPY}{\text{CVXPY}\xspace}
\newcommand{\dReal}{\text{dReal}\xspace}
\begin{document}
\title{Certifying Lyapunov Stability of Black-Box Nonlinear Systems via Counterexample Guided Synthesis (Extended Version)}
\titlerunning{Lyapunov Stability of Black-Box Systems via CEGIS}
%

\author{Chiao Hsieh\inst{1}\orcidID{0000-0001-8339-9915} \and
Masaki Waga\inst{1}\orcidID{0000-0001-9360-7490} \and
Kohei Suenaga\inst{1}\orcidID{0000-0002-7466-8789}}
\authorrunning{C. Hsieh et al.}
%
\institute{
Graduate School of Informatics, Kyoto University, Kyoto, Japan\\
\email{\{chsieh16, mwaga, ksuenaga\}@fos.kuis.kyoto-u.ac.jp}}

\maketitle              
\begin{abstract}
Finding Lyapunov functions to certify the stability of control systems has been an important topic for verifying safety-critical systems.
Most existing methods on finding Lyapunov functions require access to the dynamics of the system.
Accurately describing the complete dynamics of a control system however remains highly challenging in practice.
Latest trend of using learning-enabled control systems further reduces the transparency.
Hence, a method for black-box systems would have much wider applications.

Our work stems from the recent idea of sampling and exploiting Lipschitz continuity to approximate the unknown dynamics.
Given Lipschitz constants, one can derive a \emph{non-statistical upper bounds} on approximation errors;
hence a strong certification on this approximation can certify the unknown dynamics.
We significantly improve this idea by directly approximating the Lie derivative of Lyapunov functions instead of the dynamics.
We propose a framework based on the learner--verifier architecture from Counterexample-Guided Inductive Synthesis~(\CEGIS).
Our insight of combining \emph{regional verification conditions} and \emph{counterexample-guided sampling} enables a guided search for samples to prove stability region-by-region.
Our \CEGIS algorithm further ensures termination.

Our numerical experiments suggest that it is possible to prove the stability of 2D and 3D systems with a few thousands of samples.
Our visualization also reveals the regions where the stability is difficult to prove.
In comparison with the existing black-box approach, our approach at the best case requires less than 0.01\% of samples.

\keywords{Lyapunov stability, Black-box systems, Counterexample-guided inductive synthesis (\CEGIS), Verification}
\end{abstract}
\section{Introduction}\label{sec:intro}

\begin{figure}[t]
    \centering
    \tikzstyle{process} = [rectangle, minimum width=15mm, minimum height=1cm, draw=black]
\tikzstyle{verify} = [rectangle, text width=28mm, minimum height=1cm, draw=black, fill=white]
\tikzstyle{decision} = [diamond, minimum width=15mm, minimum height=0.5cm, aspect=3, text centered, draw=black]
\tikzstyle{sum} = [circle, minimum width=0.25cm, minimum height=0.25cm, text centered, draw=black]
\tikzstyle{arrow} = [thick,->,>=stealth]

\pgfdeclarelayer{bg}
\pgfsetlayers{bg,main}

\begin{tikzpicture}[node distance=15mm,scale=1.0,every node/.style={transform shape}]
\node (init) {};
\node (learn)  [process, text width=35mm, right=15mm of init] {%
    \textbf{Learner} (\cref{sec:learner})\\ Propose $\lya_\vpara$ compatible with samples $S$};
\node (verify1) [verify, text width=35mm, below=.6cm of learn] {%
    \textbf{Verifier} (\cref{sec:approx})\\ Check $\lya_{\vpara}$ w.r.t \\Conditions~\eqref{cond:lya-pd} and~\eqref{cond:lya-stab}};
\node (sample) [process, text width=27mm, right=23mm of verify1] {%
    Obtain $\yj=\fbb(\xj)$ for each $\xj \in X_c$};
\node (add) [process, text width=27mm] at (sample |- learn) {%
    Add samples \\
    $S \gets S \cup S_c$};

\draw [arrow] (init) -- node [above] {Initial $S$} (learn);
\draw [arrow] (learn) -- node[left] {Candidate $\lya_\vpara$} (verify1);
\draw [-] (verify1) -- node[above] {CEX States $X_c$} (sample);
\draw [arrow] (sample) -- node[left] {New Samples $S_c$} (add);
\draw [arrow] (add) -- node [above] {$S$} (learn);

\end{tikzpicture}
    \caption{Architecture of \CEGIS of Lyapunov functions.}
    \label{fig:wbcegus-flow}
\end{figure}

Lyapunov method is a powerful tool for dynamical system analysis.
The existence of a Lyapunov function allows the study of important properties of the system, such as stability or positive invariants~\cite{khalil_nonlinear_2002}.
Even though the well-known Lyapunov theorems were proposed more than a century ago,
the importance of Lyapunov method has led to a large amount of research for automated discovery of Lyapunov functions for a given system~\cite{hafstein_review_2015}.
The major limitation of existing methods is that they usually assume the dynamical system is a \emph{white-box} model,
such as ordinary differential equations~(ODE).
In practice, however, models are rare and often a mixture of ODEs, simulation code, and observed data.
A method for \emph{black-box} systems that uses input, output, or partial information is more viable.

Up until recently,
Zhou et al.~\cite{zhou_neural_2022} proposed a data-driven method for stability and Lyapunov-based control for black-box systems.
The main idea of their method is to use \emph{evenly-spaced} samples to construct an approximation and synthesize a Lyapunov function with the approximation.
If the black-box system is Lipschitz continuous and the set of the samples is dense enough,
then the correctness of the Lyapunov function is formally guaranteed in a \emph{non-statistical} way.
However, their method requires an excessive number of samples to achieve the formal guarantee.
Indeed, for example, 9 million samples are used in \cite{zhou_neural_2022} for showing the stability of the Van der Pol oscillator, a second-order 2D system.

Our goal is to reduce the number of samples and still show the stability under the Lipschitz continuity assumption.
Our main insight is as follows:
Not every region in the state space requires the same density of samples.
We, therefore, aim to lazily sample the state space following the Counterexample-Guide Inductive Synthesis~(\CEGIS) framework~\cite{solar-lezama_program_2008}.
In the context of Lyapunov function synthesis, the \CEGIS framework is a search strategy for finding valid Lyapunov functions.
It is formalized as the interaction between a \emph{learner} and a \emph{verifier}.
The learner proposes a candidate function,
and the verifier checks if the candidate is a valid Lyapunov function for the given dynamics.
If the candidate is valid, then we found a Lyapunov function certifying the stability.
Otherwise, a counterexample falsifying the candidate is generated,
and the learner proposes a new candidate accounting for the counterexample.

\cref{fig:wbcegus-flow} describes the overall flow of our \CEGIS algorithm.
The crucial feature and the novelty of our \CEGIS is the \emph{regional verification condition} for our verifier to check the validity of a candidate with respect to a black-box system.
Our regional verification condition proves the formal stability through sampling the black-box system,
and it supports lazy sampling and avoids the need for evenly-spaced sampling.
This enables counterexample-guided sampling to obtain new samples only when necessary.
We further establish the theorems to show that lazy sampling can be as powerful as evenly-spaced sampling.
We also provide the termination of our \CEGIS algorithm:
Our \CEGIS algorithm either synthesizes or shows the absence of a Lyapunov function in the hypothesis space of candidates.

We implemented a prototype of our \CEGIS algorithm and evaluated it with benchmarks from the literature~\cite{abate_fossil_2021,zhou_neural_2022}.
Our experiment shows our prototype can synthesize a Lyapunov function with a few thousand of samples for 2D and 3D systems,
and it at best uses less than 0.01\% of samples compared with~\cite{zhou_neural_2022}.
We summarize our main contributions as follows:
\begin{compactenum}[1.]
    \item We propose a \CEGIS-based algorithm to synthesize Lyapunov functions for certifying the stability of black-box systems.
          Our verifier uses a novel black-box regional verification condition to check the validity of a candidate.
    \item We prove that our \CEGIS algorithm either synthesizes a Lyapunov function or shows the absence of a true Lyapunov function in the hypothesis space. 
         Our design of the hypothesis space for learning and the analytic center-based learner ensures the termination according to convex optimization theories.
    \item We implemented a prototype and evaluated our \CEGIS algorithm with existing benchmarks.
          The result demonstrates the effectiveness of our algorithm; it, at best, uses less than 0.01\% of samples compared with the existing work.
\end{compactenum}

\paragraph*{Paper Organization}
In \cref{sec:prelim}, we review dynamical systems, Lipschitz continuity, Lyapunov stability criteria, an overview of \CEGIS methods, and the convex feasibility problem.
In \cref{sec:approx}, we introduce our regional verification conditions for certifying Lyapunov stability for black-box systems.
In \cref{sec:learner}, we provide our choice of the hypothesis space and the learner design.
In \cref{sec:cegus}, we provide our \CEGIS algorithm and proof for its termination.
We discuss our experiments on nonlinear systems in \cref{sec:eval} 
and conclude in \cref{sec:conclusion}.

\subsection{Related Works}

\begin{table}
    \centering
    \caption{Comparison on \CEGIS of Lyapunov functions: ``BB'' stands for ``Black-Box Systems'', ``Term.'' stands for ``Termination'', and ``CS'' stands for ``Control Synthesis''.}
    \label{tab:related}
    \begin{tabular}{l|c|c|c}
Approaches
     &  BB  & Term. & CS \\ \hline
FOSSIL~\cite{ahmed_automated_2020,abate_formal_2021,abate_fossil_2021}
     &             &            &            \\
Chang et al.~\cite{chang_neural_2019}
     &             &            & \checkmark \\
Chen et al.~\cite{chen_learning_2021,chen_learning_2021-1}
     &             & \checkmark &            \\
Berger et al.~\cite{berger_learning_2022,berger_counterexample-guided_2023}
     &             & \checkmark &            \\
Masti et al.~\cite{masti_counter-example_2023}
     &             & \checkmark & \checkmark \\
Ravanbakhsh et al.~\cite{ravanbakhsh_counter-example_2015,ravanbakhsh_counterexample-guided_2015,ravanbakhsh_robust_2016,ravanbakhsh_learning_2019}
     &             & \checkmark & \checkmark \\
Zhou et al.~\cite{zhou_neural_2022}
     &  \checkmark &            & \checkmark \\
Ours &  \checkmark & \checkmark &            \\
    \end{tabular}
\end{table}

Finding Lyapunov functions for white-box systems has been an active research topic since the 1960s.
We refer readers to recent surveys for comprehensive reviews~\cite{hafstein_review_2015,dawson_safe_2023}.
Here, we provide a high-level comparison in \cref{tab:related} between our approach and others from two threads for the Lyapunov stability analysis: sampling-based approaches with formal guarantees and \CEGIS approaches.
In short, we propose a black-box \CEGIS approach with termination that has not been explored in previous works.

\paragraph{Sampling-Based Lyapunov Stability}
For white-box systems, \cite{kapinski_discovering_2013,bobiti_delta-sampling_2015} have first studied $\delta$-sampling and extended to prove the Lyapunov stability.
\cite{bobiti_automated-sampling-based_2018} further considered the negative definiteness of the Lie derivative and proposed non-evenly spaced sampling approach.
For black-box systems, \cite{zhou_neural_2022} is the only approach using sampling to ensure the formal stability to our knowledge.
In~\cite{zhou_neural_2022}, an approximation of the unknown dynamics is constructed via evenly-spaced sampling
(or \emph{$\dist$-sampling}~\cite{bobiti_delta-sampling_2015}),
with a rigorous bound on approximation errors.
They verify the approximated dynamics plus the error bound to certify the Lyapunov stability of the unknown dynamics.
We reduced the number of samples significantly compared with \cite{zhou_neural_2022} via 
CEGIS and lazy sampling.

\paragraph{\CEGIS of Lyapunov Functions.}
Besides the \CEGIS-based tool, FOSSIL~\cite{abate_fossil_2021,abate_formal_2021,ahmed_automated_2020},
we reviewed papers that studied termination of \CEGIS~\cite{ravanbakhsh_counter-example_2015,ravanbakhsh_counterexample-guided_2015,ravanbakhsh_robust_2016,ravanbakhsh_learning_2019,chen_learning_2021,chen_learning_2021-1,berger_learning_2022,berger_counterexample-guided_2023,masti_counter-example_2023}.
All approaches focus on white-box systems and cast \CEGIS as a cutting-plane method for solving instances of convex feasibility problems.
Depending on the hypothesis space and the system dynamics, the verifier can be implemented with different constraint solving engines such as Satisfiability Modulo Theories used in~\cite{abate_formal_2021}, Mixed Integer Quadratic Programming used in~\cite{chen_learning_2021-1}, Semidefinite Programming used in~\cite{ravanbakhsh_learning_2019}, etc.
Our approach applies convex feasibility for showing termination but for black-box systems.

\section{Preliminaries}\label{sec:prelim}

In this section, we recall preliminary notions, including the continuity and stability of dynamical systems in Section~\ref{subsec:dyn},
and we briefly review the existing Counterexample Guided Inductive Synthesis~(\CEGIS) framework in Section~\ref{subsec:wb-cegus}.

\paragraph*{Notations}
We use $x\in\Reals$ for the real numbers, $|x|$ for the absolute value of $x \in \Reals$, $\vx\in\Reals^n$ for a column vector $\vx$ in the $n$-dimensional Euclidean space,
$\vx^T$ for the transpose of $\vx$,
$\zero$ for the vector of zeros as the origin,
and $\ve_i\in\Reals^n$ as the $i$-th basis vector.
We also use $\vx=[x_1\dotsc x_n]^T$ for denoting elements explicitly and use $x_i$ for the $i$-th element in $\vx$,
and equivalently $\vx=\sum_{i=1}^n x_i\ve_i$.
We use $\vx\cdot \vy = \vx^T\vy = \sum_{i=1}^n x_i y_i$ for the inner product of two vectors $\vx, \vy \in \Reals^n$,
and $\norm{\vx} = \sqrt{\vx\cdot\vx} = \sqrt{\sum_{i=1}^n x_i^2}$ for the Euclidean norm of $\vx\in \Reals^n$.
When there is no ambiguity, we use the notation $(\vx, \vu) \in \Reals^{n+m}$ as the concatenation of two vectors $\vx\in\Reals^n$ and $\vu\in\Reals^m$.
Given a scalar function $\lya:\Reals^n \mapsto \Reals$,
the gradient of $\lya(\vx)$ is denoted as $\gradlya(\vx)=[\frac{\partial \lya(\vx)}{\partial x_1}\dotsc \frac{\partial \lya(\vx)}{\partial x_n}]^T$.
Additionally, we use the following terms to distinguish different kinds of sets:
a \emph{domain} $\Dom \subseteq \Reals^n$ is an \emph{open and connected subset},
and a \emph{region} $\Reg \subset \Reals^n$ is a \emph{compact and connected subset}.
The list of notations is available in \cref{appx:symbols}.

\subsection{Lipschitz Continuity and Lyapunov Stability}\label{subsec:dyn}
Throughout this work, we consider a nonlinear dynamical system
modeled with a vector field $\fbb:\Reals^n \to \Reals^n$
where $\Reals^n$ is the state space.
The closed-loop control system is a system of ordinary differential equations~(ODEs) of the form:
\begin{equation}\label{sys:closed-loop}
\dot{\vx} = \fbb(\vx)
\end{equation}
Without loss of generality, we assume the origin $\zero \in \Reals^n$ is an equilibrium of the closed-loop system
so that $\fbb(\zero) = \zero$,
and we do not assume a closed-form expression of $\fbb$.
The main objective is to certify the stability of System~\eqref{sys:closed-loop} in the sense of Lyapunov,
i.e., stability guarantees are established if we can find Lyapunov functions for System~\eqref{sys:closed-loop}.
We next present preliminaries on Lipschitz continuity and Lyapunov stability analysis.

\begin{definition}[Lipschitz continuity]
A function $f:\Reals^{n_1} \to \Reals^{n_2}$ is \emph{Lipschitz continuous} in a domain $\Dom \subseteq \Reals^{n_1}$ if there is a constant $L > 0$ such that, for any $\vx, \Bar{\vx} \in \Dom$,
$\norm{f(\vx) - f(\Bar{\vx})} \leq L \norm{\vx - \Bar{\vx}}$.
Note that $L$ need not be the smallest value,
and we call any such $L$ a \emph{Lipschitz bound in $\Dom$} throughout.
\end{definition}
The existence and uniqueness theorem~\cite[Theorem~3.1]{khalil_nonlinear_2002} states that the Lipschitz continuity of the right-hand side of System~\eqref{sys:closed-loop} guarantees a unique solution trajectory passing through a given initial state over a bounded time.
Hence, Lipschitz continuity is a widely accepted assumption.
Moreover, several methods have been proposed to estimate Lipschitz bounds for black-box systems~\cite{Wood1996EstimationOT}.
We further provide a simple extension for discussing different regions within $\Dom$.
\begin{definition}[Regional Lipschitz Bound]
Given a Lipschitz continuous function $\fbb$ in a domain $\Dom$ and a region $\Reg \subseteq \Dom$, $\lipreg$ is a \emph{regional Lipschitz bound} in $\Reg$ if $\lipreg$ is a Lipschitz bound for some domain $\Dom'$ such that $\Reg \subseteq \Dom' \subseteq \Dom$.
\end{definition}
By definition, a Lipschitz bound $\lipbb$ in $\Dom$ is always a Lipschitz bound in $\Reg$ because $\Reg \subseteq \Dom' \subseteq \Dom$, so we assume $\lipreg \leq \lipbb$.
In this paper, we assume that a Lipschitz bound $\lipbb$ is provided for $\fbb$ in the entire domain $\Dom \subset \Reals^n$,
and regional Lipschitz bounds $\lipreg$ for some regions $\Reg$ may be provided but not required.

We use \emph{Lyapunov functions} to certify the asymptotic stability of System~\eqref{sys:closed-loop}.
Specifically, we focus on a region of interest $\States \subseteq \Dom$.
\begin{definition}[Lyapunov Function for Asymptotic Stability]\label{def:lya}
Given a \emph{region of interest~(ROI)} $\States \subseteq \Dom \setminus \{\zero\}$ surrounding but excluding the origin,
a continuously differentiable function $\lya: \Reals^n \to \Reals$ is called a \emph{Lyapunov function} for System~\eqref{sys:closed-loop} if $\lya$ satisfies the following:
\begin{align}
    & \lya(\zero) = 0 \land \forall \vx \in\States, \lya(\vx) > 0 \label{cond:lya-pd}\\
    & \forall \vx \in\States, \liederlya < 0 \label{cond:lya-stab}
\end{align}
where $\derlya$ is the \emph{gradient vector} of $\lya$,
and $\liederlya$ is the \emph{Lie derivative} of $\lya$ along the flow of System~\eqref{sys:closed-loop}.
We call $\lya$ is \emph{(locally) positive definite} if $\lya$ satisfies Condition~\eqref{cond:lya-pd}
and $\lya$ is \emph{(locally) decreasing along trajectories} of System~\eqref{sys:closed-loop} if $\lya$ satisfies Condition~\eqref{cond:lya-stab}.
\end{definition}

When $\States = \Dom \setminus \{\zero\}$, Lyapunov's stability theorem~\cite[Theorem 4.1]{khalil_nonlinear_2002} states that
the existence of a Lyapunov function in Definition~\ref{def:lya} guarantees the \emph{asymptotic stability} of System~\eqref{sys:closed-loop}.
The primary insight is that if $\lya$ is positive definite and monotonically decreasing along the trajectories,
it must eventually approach its minimum value 0, which indicates that the system must reach equilibrium $\zero$.
Intuitively, it is useful to think of $\lya$ as a generalized
energy: If the system is always dissipating energy, then it will eventually come to rest.
In practice, we choose an \ROI $\States \subset \Dom$ excluding a small ball around the origin to address the numerical robustness~\cite{gao_numerically-robust_2019}.

\subsection{Counterexample Guided Synthesis of Lyapunov Functions}\label{subsec:wb-cegus}

One of the common approaches to show the Lyapunov stability of a system is via \emph{counterexample-guided inductive synthesis (\CEGIS)} of Lyapunov functions~\cite{chang_neural_2019,ravanbakhsh_learning_2019,chen_learning_2021-1,abate_formal_2021,zhou_neural_2022}.
\cref{fig:wbcegus-flow} outlines a typical \CEGIS algorithm.
Starting from an initial set of samples $S$, a \CEGIS algorithm proceeds as follows:
\begin{compactenum}[1.]
    \item \textbf{Learner} proposes a candidate $\lya_\vpara$ based on the current samples $S$.
    \begin{itemize}
        \item If \textbf{Learner} cannot find any $\lya_\vpara$, stop with no candidates found.
    \end{itemize}
    \item \textbf{Verifier} checks if $\lya_\vpara$ satisfies Conditions~\eqref{cond:lya-pd} and~\eqref{cond:lya-stab}.
    \begin{itemize}
        \item If true, return $\lya_\vpara$ as the Lyapunov function.
        \item If false, find \emph{counterexample states} $X_c$ to falsify Conditions~\eqref{cond:lya-pd} and~\eqref{cond:lya-stab}.
    \end{itemize}
    \item The algorithm samples the output $\yj=\fbb(\xj)$ for each state $\xj \in X_c$ to obtain new samples $S_c$, adds $S_c$ to $S$, and goes back to Step~1.
\end{compactenum}
Overall, the learner avoids previously falsified candidates by including counterexamples,
and the verifier ensures the correctness of the \CEGIS algorithm.

\paragraph{Learners in \CEGIS of Lyapunov Functions}

One standard approach to synthesizing a Lyapunov function candidate in \CEGIS is to fix a parameterized function template and find an appropriate parameter vector with respect to the current samples $S$.
For simplicity, we denote a Lyapunov function candidate as $\lya_\vpara$ and its gradient as $\gradlya_\vpara$ with the \emph{parameter vector} $\vpara \in \Reals^d$ (or \emph{weights} in machine learning literature).
We assume $\lya_\vpara(\zero) = 0$ for every parameter $\vpara$ by design.
Moreover, we assume that the proposed candidates are \emph{observation compatible}~\cite[Definition~5]{ravanbakhsh_learning_2019}.

\begin{definition}[Observation Compatibility]\label{def:compatible}
A candidate $\lya_\vpara$ is \emph{compatible with a set of samples} $S$ if
\(\lya_\vpara(\xj) > 0\) and \(\gradlya_\vpara(\xj) \cdot \yj < 0\)
for all $(\xj, \yj) \in S$.
\end{definition}
Observation compatibility can be considered as a weakening of Conditions~\eqref{cond:lya-pd} and~\eqref{cond:lya-stab} with respect to $\xj$ seen in $S$ instead of the entire $\States$.
It can be enforced either as hard constraints~\cite{ravanbakhsh_learning_2019,abate_formal_2021,chen_learning_2021-1} or as soft constraints such as the empirical Lyapunov risk in~\cite{chang_neural_2019,zhou_neural_2022}.

\paragraph{Verifiers in \CEGIS of Lyapunov Functions}
A verifier in \CEGIS decides if the given candidate $\lya_\vpara$ is truly a Lyapunov function or falsifies $\lya_\vpara$ with a \emph{counterexample state} $\vx\in\States$ for Conditions~\eqref{cond:lya-pd} or~\eqref{cond:lya-stab}.
Since the verifier must check Conditions~\eqref{cond:lya-pd} and~\eqref{cond:lya-stab} against all states in the ROI $\States$,
a major challenge arises when $\fbb$ in System~\eqref{sys:closed-loop} is black-box.
That is, an approximation is required to interpolate or extrapolate from the current samples $S$ to unobserved states in $\States$ to check Condition~\eqref{cond:lya-stab}.
Further, we need to derive a bound on approximation error
so that checking the approximation with the error bound ensures Condition~\eqref{cond:lya-stab}.

\paragraph*{Termination of \CEGIS Algorithms}
Due to the uncountable hypothesis space for $\vpara$, the \CEGIS algorithm may never terminate.
Existing works reduce the learning problem in a selected hypothesis space, e.g., linear combinations of monomials~\cite{ravanbakhsh_learning_2019} and positive definite matrices~\cite{chen_learning_2021-1},
to \emph{convex feasibility problems with a separating oracle},
and apply existing \emph{cutting-plane methods} to guarantee the termination.
We will introduce convex feasibility and a specific cutting-plane method in \cref{subsec:convex}.
We will provide our design of the learner and verifier for our \CEGIS algorithm
and the reduction to convex feasibility in \cref{sec:learner}.

\subsection{Convex Feasibility with Separating Oracle}\label{subsec:convex}

\newcommand{\va}{\mathbf{a}}

In this work, we obtain a terminating procedure for Lyapunov function synthesis by reducing the problem to the convex feasibility problem.
In short, \emph{convex feasibility} is to find a point inside a convex set,
which is one of the fundamental problems in convex optimization.
Here, we specifically review the convex feasibility problem based on \emph{a separating oracle}~\cite{goffin_complexity_1996,jiang_improved_2020}.
As pointed out in~\cite{goffin_complexity_1996},
the separating oracle allows us to implicitly represent a general convex set defined by a possibly infinite intersection of convex sets,
which is crucial for our learner in \cref{sec:learner}.
We first define the separating oracle and the convex feasibility.

\begin{definition}[Separating Oracle]
Let $\Gamma \subseteq \Reals^d$ be a convex set with a non-empty interior $\INT(\Gamma)\neq\emptyset$.
Given a point $\cand \notin \INT(\Gamma)$,
we represent a \emph{separating hyperplane} for $\cand$ and $\Gamma$ by a pair $(\va, b)$ of a vector $\va\in\Reals^d$ and a scalar $b\in\Reals$ such that
$\va \cdot \cand \geq b$ and
\(
\Gamma \subseteq \{\vpara \mid \va \cdot \vpara < b\}.
\)
A \emph{separating oracle} of $\Gamma$ either answers $\cand\in\INT(\Gamma)$ or generates a separating hyperplane
for $\cand\notin\INT(\Gamma)$.
\end{definition}
\begin{definition}[Convex Feasibility]\label{def:convex-feasibility}
Given a separating oracle for a convex set $\Gamma \subseteq \Reals^d$ contained in a unit hypercube,
the \emph{convex feasibility problem} is to
either find a point $\vpara \in \Gamma$ or prove that $\Gamma$ does not contain a ball of radius $\rob$.
\end{definition}

Among various cutting-plane for solving the convex feasibility using a separating oracle,
we use the \emph{analytic center cutting-plane method (ACCPM)}, which is simple but effective.
See, e.g.,~\cite{jiang_improved_2020} for a comprehensive comparison of recent algorithms.
The insight of \ACCPM is to iteratively propose the \emph{analytic center} of the polytope defined by 
separating hyperplanes
and shrinks the polytope by adding a new separating hyperplane whenever the analytic center is rejected.
\begin{definition}[Analytic Center of a Polytope]
Given a polytope $\Hyps\subseteq \Reals^d$ defined by $k$ halfspaces,
that is, $\Hyps = \bigcap_{i=1}^k\{\vpara \mid \va_i \cdot \vpara < b_i\}$,
the \emph{analytic center} of $\Hyps$ is the point $\vpara^* \in \Reals^d$ such that 
\(
\vpara^* = \arg\max_{\vpara\in\Reals^d} \sum_{i=1}^k \ln (b_i -\va_i \cdot \vpara_i)
\).
\end{definition}
We include here a bound on queries to the oracle for \ACCPM.
\begin{theorem}[{From~\cite[Theorem~6.6]{goffin_complexity_1996}}]\label{thm:accpm}
The analytic center cutting-plane method~(\ACCPM) solves the convex feasibility problem with
$k$ queries to the separating oracle as soon as $k$ satisfies
\(
\frac{\rob^2}{d} \geq \frac{\frac{1}{2} +2d\ln(1+\frac{k+1}{8d^2})}{2d+k+1}
\)
where $d$ and $\rob$ are the same as in \cref{def:convex-feasibility}.
\end{theorem}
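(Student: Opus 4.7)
The statement is quoted verbatim from Goffin and Vial, so my plan is to follow their potential-function argument based on the logarithmic barrier that underlies the analytic center. Concretely, I would track the potential
\[
\Phi_k \;=\; \max_{\vpara\in\Reals^d}\ \sum_{i=1}^k \ln\bigl(b_i - \va_i\cdot\vpara\bigr),
\]
evaluated on the polytope $\Hyps_k = \bigcap_{i=1}^k \{\vpara \mid \va_i\cdot\vpara < b_i\}$ assembled after $k$ oracle queries (assuming the convex set $\Gamma$ has been rescaled into the unit hypercube as in \cref{def:convex-feasibility}). The maximizer of this concave program is exactly the analytic center queried by \ACCPM at iteration $k+1$.

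First I would derive an upper bound on $\Phi_k$ in terms of $k$ and $d$. The crucial structural fact is that each new hyperplane $(\va_{k+1}, b_{k+1})$ cuts precisely through the previous analytic center, so the Hessian of the log-barrier at that point governs how far the center must move to regain feasibility. A careful Newton-step analysis on the self-concordant barrier $-\sum_i \ln(b_i - \va_i\cdot\vpara)$ shows that the increment $\Phi_{k+1} - \Phi_k$ is dominated by an expression of order $\tfrac{1}{2} + 2d\ln(1 + (k{+}1)/(8d^2))$, summed appropriately in $k$.

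Next I would derive a matching lower bound on $\Phi_k$ under the hypothesis that $\Gamma$ contains a ball $\Ball(\vpara_0,\rob)$ of radius $\rob$. Evaluating the sum $\sum_i \ln(b_i - \va_i\cdot\vpara)$ at $\vpara_0$, combined with the fact that every cut was produced by the separating oracle with respect to a point outside $\Gamma$, forces each slack $b_i - \va_i\cdot\vpara_0$ to be at least $\rob\,\norm{\va_i}$. Averaging over the $2d+k+1$ active constraints (including the $2d$ that confine the unit hypercube) yields $\Phi_k \geq (2d+k+1)\cdot\tfrac{\rob^2}{d}/2$ up to normalizing constants, after rescaling so $\norm{\va_i} = 1$.

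Finally, contrasting the upper and lower bounds produces the stated inequality: if it is \emph{violated}, the two estimates are inconsistent, so the ball cannot exist and \ACCPM must have already returned a feasible point. The main technical obstacle is the self-concordance bookkeeping in the upper-bound step, where the logarithmic factor $\ln(1+(k{+}1)/(8d^2))$ emerges from quantifying exactly how much the analytic center can shift when a central cut is added; this is precisely the content of~\cite[Theorem~6.6]{goffin_complexity_1996}, which we invoke to close the argument without reproducing the full self-concordance calculus.
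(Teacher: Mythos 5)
The paper offers no proof of \cref{thm:accpm} at all: it is imported verbatim by citation from~\cite[Theorem~6.6]{goffin_complexity_1996}, and your proposal ultimately does the same thing, since after sketching the potential-function argument (upper bound on the log-barrier potential from central cuts versus a lower bound from the assumed $\rob$-ball) you explicitly invoke the cited theorem to close the self-concordance bookkeeping. Your outline is a reasonable account of the standard ACCPM analysis even though some intermediate quantities are stated loosely, but in substance both you and the paper rest the statement on the external reference, so this is essentially the same approach.
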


Now to implement a separating oracle for convex feasibility,
we consider when $\Gamma$ is a subset of a simpler convex set, for example,
$\Gamma \subseteq \{\vpara\mid g(\vpara) < 0\}$ for a convex function $g$.
A \emph{subgradient} of $g$ is commonly used as a separating oracle.
\begin{definition}[Subgradient and Subdifferential]
Let $g : \Reals^d \to \Reals$ be a convex function,
a vector $\va \in \Reals^d$ is called a \emph{subgradient} of $g$ at a point $\cand \in \Reals^d$
if for any $\vpara \in \Reals^d$, we have $g(\vpara) \geq g(\cand) + \va \cdot (\vpara - \cand)$.
The set of all subgradients of $g$ at $\cand$, denoted by $\partial g(\cand)$, is called the \emph{subdifferential} of $g$ at $\cand$.
\end{definition}
\begin{proposition}[{Existence of Subgradients~\cite[Proposition~5.4.1]{bertsekas_convex_2009}}]
Given a convex function $g$, the subdifferential $\partial g(\cand)$ for every $\cand\in\Reals^d$ is nonempty.
\end{proposition}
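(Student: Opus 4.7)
The plan is to reduce the existence of a subgradient to the classical supporting hyperplane theorem applied to the epigraph of $g$. Define
\[
E = \{(\vpara, t) \in \Reals^{d+1} \mid t \geq g(\vpara)\},
\]
which is convex precisely because $g$ is convex and has nonempty interior (it contains every $(\vpara, t)$ with $t > g(\vpara)$). Translating the subgradient inequality $g(\vpara) \geq g(\cand) + \va \cdot (\vpara - \cand)$ into the epigraph picture, I would read it as asking for an affine lower bound on $g$ whose graph supports $E$ from below at the boundary point $(\cand, g(\cand))$.

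Next, I would observe that $(\cand, g(\cand))$ is indeed on the boundary of $E$, since every neighborhood of it contains points $(\cand, g(\cand) - \varepsilon)$ lying outside $E$. The supporting hyperplane theorem then yields a nonzero normal $(\va, -\beta) \in \Reals^{d+1}$ and a support inequality
\[
\va \cdot \vpara - \beta t \;\leq\; \va \cdot \cand - \beta g(\cand) \qquad \text{for every } (\vpara, t) \in E.
\]
Fixing $\vpara = \cand$ and letting $t \to +\infty$ forces $\beta \geq 0$, so the only remaining possibilities are $\beta > 0$ (a proper subgradient) or $\beta = 0$ (a vertical hyperplane carrying no slope information).

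The main obstacle is ruling out the vertical case $\beta = 0$, and this is where the hypothesis that $g$ is real-valued on all of $\Reals^d$ is essential. If $\beta = 0$, then instantiating the support inequality at $(\vpara, g(\vpara))$ for arbitrary $\vpara \in \Reals^d$ yields $\va \cdot (\vpara - \cand) \leq 0$ for every $\vpara$, which forces $\va = \zero$ and contradicts that the normal $(\va, -\beta)$ is nonzero. Hence $\beta > 0$, and dividing the support inequality by $\beta$ and setting $t = g(\vpara)$ gives
\[
g(\vpara) \;\geq\; g(\cand) + (\va/\beta) \cdot (\vpara - \cand),
\]
which exhibits $\va/\beta$ as a subgradient of $g$ at $\cand$. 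Therefore $\partial g(\cand) \neq \emptyset$, as required.
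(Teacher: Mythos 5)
Your proof is correct, but note that the paper itself offers no proof of this proposition: it is imported verbatim from Bertsekas~\cite[Proposition~5.4.1]{bertsekas_convex_2009} as a known fact, so there is no in-paper argument to compare against. Your supporting-hyperplane-on-the-epigraph argument is the standard textbook route and is essentially the one behind the cited result: the epigraph $E$ is convex, $(\cand, g(\cand))$ lies on its boundary, the normal $(\va,-\beta)$ must have $\beta \geq 0$ by letting $t\to+\infty$, and the vertical case $\beta=0$ is excluded because $g$ is finite on all of $\Reals^d$, so $\va\cdot(\vpara-\cand)\leq 0$ for every $\vpara$ forces $\va=\zero$ (e.g.\ take $\vpara=\cand+\va$), contradicting nontriviality of the normal. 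The one point worth tightening is the parenthetical justification that $E$ has nonempty interior: merely containing the points $(\vpara,t)$ with $t>g(\vpara)$ does not by itself make them interior points; you need that a real-valued convex function on all of $\Reals^d$ is continuous (indeed locally Lipschitz), which is what makes $\{(\vpara,t)\mid t>g(\vpara)\}$ open and $(\cand,g(\cand))$ a genuine boundary point. With that continuity fact invoked explicitly, the argument is complete; it also makes visible exactly where finiteness of $g$ on the whole space is used, which is consistent with how the paper applies the proposition (its $g_\vx$ and $h_\vx$ are linear, hence finite and convex everywhere).
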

\begin{remark}
Subgradient generalizes the definition of a gradient to nonsmooth points in convex functions,
and the gradient $\nabla g(\cand)$ at a differentiable point is a subgradient.
See, e.g.,~\cite[Section 5.4]{bertsekas_convex_2009} for how to compute the subgradients.
\end{remark}

\begin{proposition}[Separating Hyperplane by Subgradient]\label{prop:grad-plane}
Given a convex function $g : \Reals^d \mapsto \Reals$,
let $\Gamma \subseteq \{\vpara \mid g(\vpara) < 0\}$,
and a point $\cand \in \Reals^d$ such that $g(\cand) \geq 0$.
The pair $(\va, b)$ of the subgradient $\va \in \partial g(\cand)$ and the scalar $b=\va\cdot\cand-g(\cand)$ is a separating hyperplane for $\cand$ and $\Gamma$.
Further, if $g$ is linear, the pair $(\nabla g(\cand), 0)$ is a separating hyperplane for $\cand$ and $\Gamma$.
\end{proposition}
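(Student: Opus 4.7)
The plan is to verify directly the two inequalities that define a separating hyperplane, namely $\va\cdot\cand \geq b$ and $\va\cdot\vpara < b$ for every $\vpara\in\Gamma$, by plugging in the definition $b = \va\cdot\cand - g(\cand)$ and exploiting the defining inequality of a subgradient.

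First, for the point $\cand$ I would observe that $\va\cdot\cand - b = g(\cand)$, which is nonnegative by hypothesis, so $\va\cdot\cand \geq b$ holds immediately. Next, for any $\vpara\in\Gamma$ I would invoke the subgradient property $g(\vpara) \geq g(\cand) + \va\cdot(\vpara-\cand)$, rearrange to $\va\cdot\vpara \leq g(\vpara) - g(\cand) + \va\cdot\cand$, and then use $g(\vpara) < 0$ together with $g(\cand) \geq 0$ to conclude $\va\cdot\vpara < -g(\cand) + \va\cdot\cand = b$. This gives the inclusion $\Gamma \subseteq \{\vpara \mid \va\cdot\vpara < b\}$ demanded by the definition of a separating hyperplane.

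For the addendum about linear $g$, I would note that linearity means $g(\vpara) = \nabla g(\cand)\cdot\vpara$ with no constant term, so taking $\va = \nabla g(\cand)$ yields $g(\cand) = \va\cdot\cand$ and hence $b = \va\cdot\cand - g(\cand) = 0$. This specializes the general construction to the stated pair $(\nabla g(\cand), 0)$.

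The only subtle point — which I expect to be the main (mild) obstacle — is carefully tracking strict versus non-strict inequalities: the definition of a separating hyperplane asks for strict inequality on the $\Gamma$ side but allows equality on the $\cand$ side, and this must be matched with the fact that $g(\vpara) < 0$ is strict while $g(\cand) \geq 0$ is not. The argument above handles this by combining the strict inequality $g(\vpara) < 0$ with the non-strict inequality $-g(\cand) \leq 0$ to preserve strictness in the final estimate, so no case analysis on whether $g(\cand)$ equals $0$ is needed.
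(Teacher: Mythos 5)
Your proof is correct and is precisely the standard subgradient argument: the identity $\va\cdot\cand - b = g(\cand) \geq 0$ handles the $\cand$ side, and the subgradient inequality combined with the strict bound $g(\vpara) < 0$ on $\Gamma$ yields the strict inequality $\va\cdot\vpara < b$, with the linear case specializing to $b = 0$. The paper states this proposition without proof (treating it as a routine fact from convex analysis), and your argument is exactly the verification the authors implicitly rely on.
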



\section{Black-Box Regional Verification}\label{sec:approx}

We present an algorithm to verify the Lyapunov stability of black-box systems from samples based on Lipschitz bounds.
This algorithm is used as the verifier (i.e., the left below component of \cref{fig:wbcegus-flow}) in the \CEGIS of Lyapunov functions, later in \cref{sec:cegus}.
Our algorithm is based on the reduction to satisfiability checking of a certain verification condition encoding the Lyapunov stability.

In Section~\ref{subsec:eps-provable}, we first define a \emph{$\dist$-provably decreasing} Lyapunov candidate with respect to \emph{evenly spaced samples} given a distance parameter $\dist>0$.
In Section~\ref{subsec:regional-vc},
we further extend our idea for an arbitrary set of samples and define \emph{regional verification conditions}.
With the help of $\dist$-provability, we then show in Theorem~\ref{thm:grid-vc} that our regional verification condition is, in fact, as strong as Condition~\eqref{cond:lya-stab}.
In addition, it allows the use of fewer samples from the black-box dynamics compared with evenly spaced sampling.
This paves the way to the verifier by counterexample guided sampling in Section~\ref{subsec:cegar}.

\subsection{Verification Condition for \texorpdfstring{$\dist$-}{}Evenly Spaced Samples}\label{subsec:eps-provable}

Here, we present a verification condition, named $\dist$-provability, assuming evenly spaced samples, i.e., the ROI $\States$ is covered by $\delta$-balls around the samples.
Our $\dist$-provability can be considered as a reformulation of existing conditions via $\delta$-sampling and Lipschitz bounds in~\cite{kapinski_discovering_2013,bobiti_delta-sampling_2015,zhou_neural_2022}.
We restate the definitions and the theorem in our notations to compare with our more general verification condition for lazy sampling in \cref{subsec:regional-vc}.

The idea is that, with $\dist$-evenly spaced samples, we can always find an observed state $\vx_i$ within $\dist$-distance of any unobserved state $\vx$.
We can use the expression $\derlya \cdot \vy_i$ with the observed output $\vy_i=\fbb(\vx_i)$ to approximate the Lie derivative $\derlya \cdot \fbb(\vx)$ with a bound on the approximation error.
Then, we can simply require that the approximation plus the error is decreasing to ensure that the Lie derivative is decreasing along trajectories,

\begin{definition}[$\dist$-cover]
For $\dist>0$, a \emph{$\dist$-cover} of $\States$ is a finite set of states $\{\vx_1, \dotsc, \vx_N\} \subset \Dom$ (or in short $\{\vx_i\}$) such that $\States \subseteq \bigcup_{i=1}^N \Ball_\dist(\vx_i)$, or equivalently,
for any $\vx \in \States$ there is $\vx_i \in \{\vx_i\}$ satisfying $\norm{\vx - \vx_i} \leq \dist$.
\end{definition}
Note that $\vx_i$ is in $\Dom$ and not necessarily in $\States$.
Therefore, we can choose $\vx_i \notin \States$ to construct a cover for $\States$ more easily.

\begin{definition}[$\dist$-provability]\label{def:provability}
Given System~\eqref{sys:closed-loop} and $\States$,
we say that a continuously differentiable function $\lya$ is \emph{$\dist$-provably decreasing} along trajectories
if there exists a $\dist$-cover $\{\vx_i\}$ of $\States$ such that,
for all $\vx_i \in \{\vx_i\}$ and $\vy_i=\fbb(\vx_i)$,
\begin{equation}\label{cond:eps-cover}
   \forall \vx \in \Ball_\dist(\vx_i) \cap \States, \quad
   \derlya \cdot \vy_i < -2M\lipbb\dist
\end{equation}
where $\lipbb$ is a Lipschitz bound for $\fbb$ in $\Dom$,
and $M \geq \sup_{\vx\in \States}\norm{\derlya}$ is an upper bound on the norm of the gradient.
\end{definition}

\begin{theorem}\label{thm:eps-equiv}
Given System~\eqref{sys:closed-loop} and a compact region $\States \subseteq \Dom \setminus \{\zero\}$,
a function $\lya$ is decreasing along trajectories \emph{(Condition~\eqref{cond:lya-stab})} if and only if $\lya$ is $\dist$-provably decreasing for some $\dist>0$ \emph{(Condition~\eqref{cond:eps-cover})}.
\end{theorem}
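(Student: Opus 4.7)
The plan is to prove both implications separately. The $(\Leftarrow)$ direction is a short application of the triangle inequality and Cauchy--Schwarz. The $(\Rightarrow)$ direction requires a compactness argument to extract a uniform strictly negative upper bound on $\liederlya$, followed by a careful choice of $\dist$ so that the sampling error can be absorbed within the slack of Condition~\eqref{cond:eps-cover}.

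For $(\Leftarrow)$, I would assume $\lya$ is $\dist$-provably decreasing, with a cover $\{\vx_i\}$ and a gradient-norm bound $M\geq\sup_{\vx\in\States}\norm{\derlya}$. Pick any $\vx\in\States$ and a cover point $\vx_i$ with $\norm{\vx-\vx_i}\leq\dist$, so that $\vx\in\Ball_\dist(\vx_i)\cap\States$. Using the decomposition
\[
\liederlya = \derlya\cdot\vy_i + \derlya\cdot(\fbb(\vx)-\fbb(\vx_i)),
\]
I bound the second summand by Cauchy--Schwarz together with the Lipschitz bound $\lipbb$ on $\fbb$ in $\Dom$ by $\norm{\derlya}\cdot\lipbb\norm{\vx-\vx_i}\leq M\lipbb\dist$. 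Combining with $\derlya\cdot\vy_i<-2M\lipbb\dist$ from~\eqref{cond:eps-cover} gives $\liederlya<-M\lipbb\dist\leq 0$, which is~\eqref{cond:lya-stab}.

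For $(\Rightarrow)$, assume Condition~\eqref{cond:lya-stab}. Since $\lya$ is continuously differentiable and $\fbb$ is continuous (Lipschitz) on $\Dom\supseteq\States$, the map $\vx\mapsto\liederlya$ is continuous on the compact set $\States$ and strictly negative there, so $c := -\max_{\vx\in\States}\liederlya > 0$. Likewise $M := \sup_{\vx\in\States}\norm{\derlya}$ is finite. Excluding the trivial case $M\lipbb=0$ (which forces $\liederlya\equiv 0$ on $\States$, contradicting the strict inequality), I choose any $\dist\in(0,\,c/(3M\lipbb))$; by compactness, $\States$ admits a finite $\dist$-cover $\{\vx_i\}$. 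Then for any $\vx\in\Ball_\dist(\vx_i)\cap\States$, the same triangle decomposition yields
\[
\derlya\cdot\vy_i \leq \liederlya + M\lipbb\dist \leq -c + M\lipbb\dist < -2M\lipbb\dist,
\]
verifying~\eqref{cond:eps-cover}.

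The main obstacle is the $(\Rightarrow)$ direction: the factor of $2$ in~\eqref{cond:eps-cover} must be sharp enough to absorb the pointwise error budget $M\lipbb\dist$ that appears when transferring the Lie derivative estimate from $\vx_i$ to $\vx$, while still leaving strict slack against the fixed threshold $-2M\lipbb\dist$. This forces the quantitative choice $\dist<c/(3M\lipbb)$ and in turn relies on having a uniform $c>0$, which is exactly where compactness of $\States$ is indispensable; without it, $\sup_{\vx\in\States}\liederlya$ could equal $0$ and no finite $\dist$ would suffice.
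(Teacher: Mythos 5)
Your proof is correct and follows essentially the same route as the paper's: both directions rest on the same Cauchy--Schwarz-plus-Lipschitz estimate $|\liederlya - \derlya\cdot\vy_i| \leq M\lipbb\norm{\vx-\vx_i} \leq M\lipbb\dist$, and your quantitative choice $\dist < c/(3M\lipbb)$ matches the paper's condition $3M\lipbb\dist \leq \beta$. The only cosmetic difference is that you prove the forward direction directly (picking $\dist$ and exhibiting a finite cover by compactness) where the paper packages the identical inequality chain as a proof by contradiction.
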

\begin{proof}
A complete proof is available in \cref{appx:thm:eps-equiv}.
\end{proof}
It follows from \cref{thm:eps-equiv} that there is a sufficiently small $\dist>0$ to prove $\dist$-provability if a Lyapunov function exists.
This suggests that proof via a dense enough sampling always exists.

\subsection{Regional Verification Condition for Arbitrary Samples}\label{subsec:regional-vc}

Although \cref{def:provability} allows us to prove the Lyapunov stability by sampling, the number of evenly spaced samples tends to be huge.
Therefore, we generalize \cref{def:provability} for arbitrary samples based on the following observations:
\begin{inparaenum}[(1)]
    \item The necessary density of the samples is not uniform over the ROI $\States$, and unevenly spaced sampling can be significantly more efficient;
    \item The nearest sample does not always provide the tightest upper bound, and the use of more than one sample can improve the approximation.
\end{inparaenum}
Based on these observations, we show
a less conservative condition for proving Lyapunov stability with samples.

Notice that if we allow any cover of $\States$, we no longer have a canonical mapping from any state $\vx \in \States$ to some center state $\vx_i$ provided by a $\dist$-cover.
Instead, we associate a set $S$ of samples to each region $\Reg$ and show that $\lya$ is decreasing along trajectories for each $\Reg$.
Furthermore, we show that it is sufficient to focus on the regions defined as a convex hull of sampled states.

The proof sketch is as follows:
Definition~\ref{def:regional-vc} provides the regional verification condition for a region using multiple samples.
Proposition~\ref{prop:bbox-regional} then proves that the regional verification conditions for all regions are a sufficient condition for Condition~\eqref{cond:lya-stab}.
Theorem~\ref{thm:grid-vc} further shows the equivalence of regional verification conditions and $\dist$-provability.
Finally, Theorem~\ref{thm:provable-convex} is our specialized theorem for using convex hulls of sampled states to cover $\States$.

\begin{proposition}\label{prop:lya-regional}
Let a set of regions $\Cover=\{\Ri\}_{i=1\dotsc N}$ be a cover of the \ROI $\States$,
i.e., $\States\subseteq \bigcup_{i=1}^N\Ri$.
If a function $\lya$ satisfies the following for all regions $\Ri \in \Cover$:
\begin{equation}\label{cond:regional-lya}
\forall \vx\in\Ri\cap\States,\quad \liederlya < 0,
\end{equation}
then $\lya$ satisfies Condition~\eqref{cond:lya-stab}.
\end{proposition}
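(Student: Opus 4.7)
The claim is essentially a direct unfolding of the cover property, so my plan is to give a short pointwise argument rather than any structural induction or approximation estimate. The plan is to fix an arbitrary $\vx \in \States$ and exhibit a single region in $\Cover$ that witnesses the strict decrease at $\vx$.

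First I would pick an arbitrary $\vx \in \States$, the quantifier target of Condition~\eqref{cond:lya-stab}. Since by hypothesis $\Cover = \{\Ri\}_{i=1\dotsc N}$ covers $\States$, there must exist an index $i \in \{1,\dotsc,N\}$ such that $\vx \in \Ri$. Combined with $\vx \in \States$, this gives $\vx \in \Ri \cap \States$.

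Next, I would invoke the regional hypothesis~\eqref{cond:regional-lya} applied to this particular region $\Ri$: the instance at $\vx$ yields $\derlya \cdot \fbb(\vx) < 0$. Since $\vx \in \States$ was arbitrary, the universal statement $\forall \vx \in \States,\ \derlya \cdot \fbb(\vx) < 0$, i.e.\ Condition~\eqref{cond:lya-stab}, follows.

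There is no real obstacle here: no continuity, compactness, or Lipschitz reasoning enters; the only ingredients used are the set-theoretic cover relation $\States \subseteq \bigcup_i \Ri$ and the assumed pointwise strict inequality on each $\Ri \cap \States$. The proposition serves chiefly as a bridge, letting subsequent results (such as \cref{thm:grid-vc} and \cref{thm:provable-convex}) establish Condition~\eqref{cond:lya-stab} by verifying the simpler region-by-region condition, and the proof should be stated in only a few lines.
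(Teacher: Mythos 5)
Your proof is correct and complete: the pointwise argument via the cover property is exactly the reasoning the paper relies on, which it considers immediate enough that no explicit proof is given for this proposition. Nothing is missing.
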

\begin{remark}\label{remark:cover}
If $\{\Ri\}_{i=1\dotsc N}$ both covers and partitions $\States$, the converse is also true.
However, using a partition may add heavy burdens to the verifier due to non-convex $\States$ and complements of regions.
If we allow $\Ri \not\subseteq \States$ and $\Ri \cap \Reg_j \neq \emptyset$,
we can use regions of simpler shapes for verification.
We will discuss this in more detail in \cref{subsec:cegar} for efficient implementations.
\end{remark}

\begin{definition}[Regional Verification Condition]\label{def:regional-vc}
Given a region $\Reg \subseteq \Dom$,
a Lipschitz bound $\lipreg$ for $\fbb$ in $\Reg$,
a function $\lya: \Reals^n \to \Reals$,
and a sample $(\xj, \yj)$ with $\xj \in \Reg$ and $\yj = \fbb(\xj)$,
we define a function, $\LieUB_{\xj, \yj}: \Reals^n \to \Reals$ as:
\[
\LieUB_{\xj, \yj}(\vx) := \norm{\derlya}\lipreg\norm{\vx-\xj} + \derlya \cdot \yj
\]
For a set of samples $S = \{(\xj_j, \yj_j)\}_{j}$ where $\xj_j \in \Reg$ and $\yj_j=\fbb(\xj_j)$ for each $j$,
the function $\lya$ is \emph{regionally decreasing} in $\Reg$ witnessed by samples $S$ if
\begin{equation}\label{cond:regional-lya-bb}
\forall \vx \in\Reg\cap\States, \bigvee_{(\xj, \yj)\in S} \LieUB_{\xj, \yj}(\vx) < 0
\end{equation}
We also refer to Condition~\eqref{cond:regional-lya-bb} as the \emph{regional verification condition} throughout.
\end{definition}

\begin{proposition}\label{prop:bbox-regional}
Given a region $\Reg \subseteq \Dom$ and a Lipschitz bound $\lipreg$ for $\fbb$ in $\Reg$,
if a function $\lya$ is regionally decreasing in $\Reg$ witnessed by samples $S$,
then $\lya$ satisfies Condition~\eqref{cond:regional-lya} in $\Reg$.
\end{proposition}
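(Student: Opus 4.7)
The plan is to show that for every $\vx \in \Reg \cap \States$ and every sample $(\xj, \yj) \in S$, the quantity $\LieUB_{\xj, \yj}(\vx)$ really is an upper bound on the Lie derivative $\derlya \cdot \fbb(\vx)$. Once this pointwise ``upper-bound'' lemma is established, the regional verification condition \eqref{cond:regional-lya-bb} — which asserts that at least one disjunct $\LieUB_{\xj, \yj}(\vx) < 0$ holds at each $\vx$ — immediately forces $\derlya \cdot \fbb(\vx) < 0$, which is exactly Condition~\eqref{cond:regional-lya} on $\Reg$.

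To prove the upper-bound step, I would fix $\vx \in \Reg \cap \States$ and a sample $(\xj, \yj) \in S$ with $\xj \in \Reg$ and $\yj = \fbb(\xj)$, then write
\[
\derlya \cdot \fbb(\vx) \;=\; \derlya \cdot \yj \;+\; \derlya \cdot \bigl(\fbb(\vx) - \fbb(\xj)\bigr).
\]
The second term is bounded using Cauchy–Schwarz, giving $\derlya \cdot (\fbb(\vx) - \fbb(\xj)) \leq \norm{\derlya}\,\norm{\fbb(\vx) - \fbb(\xj)}$. Since both $\vx$ and $\xj$ lie in $\Reg$, and since $\lipreg$ is by definition a Lipschitz bound of $\fbb$ on some domain $\Dom'$ with $\Reg \subseteq \Dom' \subseteq \Dom$, the Lipschitz inequality $\norm{\fbb(\vx) - \fbb(\xj)} \leq \lipreg \norm{\vx - \xj}$ applies. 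Substituting yields
\[
\derlya \cdot \fbb(\vx) \;\leq\; \norm{\derlya}\,\lipreg\,\norm{\vx - \xj} + \derlya \cdot \yj \;=\; \LieUB_{\xj, \yj}(\vx).
\]

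With the upper bound in hand, the proof concludes by invoking Condition~\eqref{cond:regional-lya-bb}: for the chosen $\vx \in \Reg \cap \States$ there exists some $(\xj, \yj) \in S$ with $\LieUB_{\xj, \yj}(\vx) < 0$, and then the upper-bound step gives $\derlya \cdot \fbb(\vx) < 0$. Since $\vx$ was arbitrary in $\Reg \cap \States$, this is exactly Condition~\eqref{cond:regional-lya} restricted to $\Reg$.

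The reasoning is essentially a one-line application of Cauchy–Schwarz followed by the Lipschitz inequality, so there is no real obstacle. The only subtlety worth flagging carefully is that the Lipschitz inequality must be applied with the \emph{regional} constant $\lipreg$, which requires both argument points $\vx$ and $\xj$ to lie in the region $\Reg$ (or at least in the enlarging domain $\Dom'$ from the regional Lipschitz bound's definition); this is automatic here since $\vx \in \Reg \cap \States \subseteq \Reg$ and $\xj \in \Reg$ by the hypothesis on $S$.
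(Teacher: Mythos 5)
Your proposal is correct and follows essentially the same route as the paper: decompose $\derlya \cdot \fbb(\vx)$ into $\derlya \cdot \yj$ plus the error term, bound the error via Cauchy--Schwarz and the Lipschitz inequality to obtain $\LieUB_{\xj,\yj}(\vx)$ as an upper bound, and then let the disjunction in Condition~\eqref{cond:regional-lya-bb} force negativity of the Lie derivative. The only cosmetic difference is that the paper routes the Cauchy--Schwarz/Lipschitz step through its Lemma~\ref{lem:lie-ub}, whereas you inline it with the regional constant $\lipreg$ directly (and correctly note why both points lie in the region where that constant applies).
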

\begin{proof}
The proof is to show that, for each sample $(\xj, \yj)$,
$\LieUB_{\xj, \yj}(\vx)$ is an upper bound of $\liederlya$.
Hence, any upper bound less than zero suffices to prove that $\lya$ is regionally decreasing.
A complete proof is available in \cref{appx:prop:bbox-regional}.
\end{proof}

\begin{theorem}[Equivalent Power to $\dist$-provability]\label{thm:grid-vc}
A function $\lya$ is $\dist$-provably decreasing for some $\dist$
if and only if
there exists a cover $\Cover=\{\Ri\}_{i=1\dotsc N}$ and a set of sample sets $\{S_i\}_{i=1\dotsc N}$, such that $\lya$ is regionally decreasing in every region $\Ri$ witnessed by $S_i$.
\end{theorem}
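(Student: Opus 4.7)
The plan is to prove the two directions separately, with the harder direction handled by chaining together several results already proved in the excerpt.

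\textbf{Forward direction ($\dist$-provable $\Rightarrow$ regionally decreasing).} Suppose $\lya$ is $\dist$-provably decreasing, witnessed by a $\dist$-cover $\{\vx_1,\dotsc,\vx_N\}$. I would take the cover $\Cover=\{\Ri\}_{i=1\dotsc N}$ with $\Ri := \Ball_\dist(\vx_i)$ and the singleton sample set $S_i := \{(\vx_i, \vy_i)\}$ where $\vy_i=\fbb(\vx_i)$. The $\dist$-cover property directly gives $\States\subseteq \bigcup_i\Ri$. For the regional verification condition, I would use $\lipbb$ as the regional Lipschitz bound $\lipreg$ (which is valid since $\lipbb$ is a Lipschitz bound on $\Dom \supseteq \Ri$). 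For any $\vx\in\Ri\cap\States$, the bound $\norm{\vx-\vx_i}\leq\dist$ and $\norm{\derlya}\leq M$ yield
\[
\LieUB_{\vx_i,\vy_i}(\vx) = \norm{\derlya}\lipreg\norm{\vx-\vx_i} + \derlya\cdot\vy_i \leq M\lipbb\dist + \derlya\cdot\vy_i < M\lipbb\dist - 2M\lipbb\dist = -M\lipbb\dist,
\]
which is strictly negative. Hence the disjunction in Condition~\eqref{cond:regional-lya-bb} is satisfied by the lone sample $(\vx_i,\vy_i)$.

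\textbf{Backward direction (regionally decreasing $\Rightarrow$ $\dist$-provable).} Here I would not try to construct a $\dist$-cover directly from the given cover $\Cover$; instead I would route through Condition~\eqref{cond:lya-stab} by reusing the previous results. Given that $\lya$ is regionally decreasing in every $\Ri\in\Cover$ witnessed by $S_i$, Proposition~\ref{prop:bbox-regional} implies that $\lya$ satisfies the pointwise Condition~\eqref{cond:regional-lya} on each $\Ri$. Since $\Cover$ covers $\States$, Proposition~\ref{prop:lya-regional} then yields Condition~\eqref{cond:lya-stab}, i.e., $\lya$ is decreasing along trajectories on all of $\States$. Finally, applying Theorem~\ref{thm:eps-equiv} in the ``only if'' direction produces some $\dist>0$ for which $\lya$ is $\dist$-provably decreasing.

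\textbf{Main obstacle.} The forward direction is essentially a calculation with the triangle inequality and the gradient norm bound $M$; the bookkeeping around choosing $\lipreg \le \lipbb$ and the slack $2M\lipbb\dist$ is the only subtlety. The backward direction looks routine once one sees that Theorem~\ref{thm:eps-equiv} does the heavy lifting, but it does rely on the compactness of $\States$ assumed there, so I would briefly note that the hypotheses of Theorem~\ref{thm:eps-equiv} transfer without issue since the theorem statement here implicitly uses the same standing setup ($\States$ compact, $\fbb$ Lipschitz with bound $\lipbb$ on $\Dom$). No additional machinery beyond what has already been established seems necessary.
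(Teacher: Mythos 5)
Your proposal is correct and follows essentially the same route as the paper: the backward direction chains Propositions~\ref{prop:bbox-regional} and~\ref{prop:lya-regional} with Theorem~\ref{thm:eps-equiv}, and the forward direction uses the identical construction (regions $\Ri=\Ball_\dist(\vx_i)$, singleton sample sets, the global bound $\lipbb$ as $\lipreg$), differing only in the cosmetic arrangement of the inequality chain bounding $\LieUB_{\vx_i,\vy_i}$.
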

\begin{proof}
On a high level, the ``if'' direction holds because Condition~\eqref{cond:regional-lya-bb} implies Condition~\eqref{cond:regional-lya}, then Condition~\eqref{cond:lya-stab}, and thus Condition~\eqref{cond:eps-cover} by \cref{thm:eps-equiv}.
To prove the ``only if'' direction is to show Condition~\eqref{cond:eps-cover} with a $\dist$-cover implies Condition~\eqref{cond:regional-lya-bb} by construction.
A complete proof is available in \cref{appx:thm:grid-vc}.
\end{proof}
\cref{thm:grid-vc} states that, if a $\dist$-cover proof exists,
then there exists a proof using our regional verification conditions, and vice versa.
Our next theorem further relates the radius $\dist$ and the diameter of the region built from sampled states.

\begin{theorem}\label{thm:provable-convex}
If a function $\lya$ is $\dist$-provably decreasing for some $\dist > 0$,
then $\lya$ must be regionally decreasing in any region $\Reg \subseteq \Dom$ witnessed by samples $S$ satisfying the following two requirements:
\begin{itemize}
    \item The region $\Reg$ is the convex hull of the sampled states in $S$.
          That is,  $\Reg=\mathrm{conv}(\mathcal{V})$,
          where $\mathcal{V} = \{\xj\mid (\xj, \yj) \in S\}$.
    \item The diameter of $\Reg$ is bounded by:
\(
\mathrm{diam}(\Reg) \leq \frac{\dist}{2}\sqrt{\frac{2(n+1)}{n}}
\)
\end{itemize}
\end{theorem}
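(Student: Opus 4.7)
The plan is to bridge the diameter hypothesis and the regional verification condition via three ingredients: a quantitative strengthening of $\dist$-provability, Jung's theorem applied to $\Reg$, and a Carath\'eodory-type geometric lemma.

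First, I would extract from $\dist$-provability the stronger pointwise estimate $\derlya \cdot \fbb(\vx) < -M\lipbb\dist$ for every $\vx \in \States$. This is obtained by writing $\derlya \cdot \fbb(\vx) = \derlya \cdot \vy_i + \derlya \cdot (\fbb(\vx) - \vy_i)$, where $\vx_i$ is the cover element with $\norm{\vx - \vx_i} \leq \dist$, bounding the first term by Condition~\eqref{cond:eps-cover} and the second via Cauchy--Schwarz and the global Lipschitz bound by $\norm{\derlya}\lipbb\dist \leq M\lipbb\dist$. This is essentially the bookkeeping already used to establish \cref{thm:eps-equiv}.

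Second, I would convert the diameter hypothesis into a vertex-proximity statement. Jung's theorem says that any set in $\Reals^n$ of diameter $D$ fits inside a ball of radius $D\sqrt{n/(2(n+1))}$, so the hypothesis $\mathrm{diam}(\Reg) \leq \frac{\dist}{2}\sqrt{2(n+1)/n}$ is calibrated exactly so that the enclosing ball $\Ball_r(c)$ of $\Reg$ satisfies $r \leq \dist/2$. Here is the main obstacle: I need the stronger property that every $\vx \in \Reg$ lies within distance $r$ of some actual vertex $\xj \in \mathcal{V}$, whereas the naive triangle inequality through $c$ only yields $2r$, which is off by exactly the factor that would make the final estimate collapse to zero. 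I would close this gap by invoking Carath\'eodory's theorem to write $\vx = \sum_i \lambda_i \xj_i$ with $\xj_i \in \mathcal{V}$ and using the variance-style identity
\[
\sum_i \lambda_i \norm{\vx - \xj_i}^2 = \sum_i \lambda_i \norm{\xj_i - c}^2 - \norm{\vx - c}^2 \leq r^2,
\]
which forces at least one $\xj_i \in \mathcal{V}$ to satisfy $\norm{\vx - \xj_i} \leq r \leq \dist/2$.

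Finally, I would assemble the two strands. Given any $\vx \in \Reg \cap \States$, pick a witness vertex $\xj \in \mathcal{V}$ with $\norm{\vx - \xj} \leq \dist/2$ and let $(\xj, \yj) \in S$ be its sample. Adding the regional Lipschitz error $\norm{\derlya}\lipreg\norm{\vx - \xj}$ to $\derlya \cdot \yj$ gives $\LieUB_{\xj, \yj}(\vx) \leq 2\norm{\derlya}\lipreg\norm{\vx - \xj} + \derlya \cdot \fbb(\vx)$. Plugging in $\norm{\vx - \xj} \leq \dist/2$, $\norm{\derlya} \leq M$, and $\lipreg \leq \lipbb$, together with the strengthened estimate from the first step, yields
\[
\LieUB_{\xj, \yj}(\vx) \leq M\lipreg\dist + \derlya \cdot \fbb(\vx) < M\lipreg\dist - M\lipbb\dist \leq 0,
\]
which is exactly Condition~\eqref{cond:regional-lya-bb}, completing the argument.
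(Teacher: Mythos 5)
Your proposal is correct and follows essentially the same route as the paper's proof: the same calibration of the diameter bound via Jung's theorem, the same variance-identity argument (the paper's Lemma~\ref{lem:convex-hull-cover}) to get a vertex within $\dist/2$ rather than $\dist$, and the same total Lipschitz error budget of $2M\lipbb\dist$. The only difference is bookkeeping: you route the triangle inequality through the intermediate quantity $\derlya\cdot\fbb(\vx)$ (first proving $\liederlya < -M\lipbb\dist$ on all of $\States$), whereas the paper chains $\derlya\cdot\vy_i$ directly to $\derlya\cdot\yj$ via $\norm{\vx_i-\xj}\leq\tfrac{3}{2}\dist$; both decompositions are equivalent.
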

\begin{proof}
The high level idea is as follows:
Given any $\delta$-cover and any convex hull $\Reg$,
once the diameter of a convex hull is small enough,
the distances of all states in $\Reg$ to the center of the closest $\dist$-ball is also small,
so the $\delta$-provability should ensure that $\lya$ is regionally decreasing in $\Reg$.
The proof is available at \cref{appx:thm:provable-convex}.
\end{proof}
\cref{thm:grid-vc} and~\ref{thm:provable-convex} suggest covering $\States$ with convex hulls of samples instead of $\dist$-balls.
Next we discuss our covering strategy using convex hulls in \cref{subsec:cegar}.

\subsection{Regional Lyapunov Verification Algorithm}\label{subsec:cegar}

In this section, our goal is to verify if a candidate $\lya$ is a true Lyapunov function using regional verification conditions~(Condition~\eqref{cond:regional-lya-bb}) for regions in a cover of $\States$.
Recall \cref{thm:eps-equiv}:
A Lyapunov function is $\dist$-provable for a small enough $\dist$.
Thus, there are two cases when $\lya$ does not satisfy Condition~\eqref{cond:regional-lya-bb}:
\begin{compactenum}[I)]
    \item\label{case:true} $\lya$ is not a Lyapunov function.
    \item\label{case:spur} The cover of $\States$ is not fine enough for proof.
\end{compactenum}
Case~\ref{case:true} requires a counterexample for falsification
while Case~\ref{case:spur} requires searching for a fine enough cover of $\States$.
Moreover, as discussed in \cref{prop:lya-regional},
a cover of $\States$ with little overlap between regions is preferred.
Our approach combines Delaunay triangulation~\cite[Chapter 27]{toth_handbook_2017} over sampled states with counterexample-guided sampling to achieve the following desirable features:
\begin{inparaenum}[1)]
    \item It is fast to retrieve samples used in Condition~\eqref{cond:regional-lya-bb} in a region.
    \item Each region is of a simple shape for fast verification.
    \item The generated regions overlap as little as possible.
    \item It finds a counterexample for Case~\ref{case:true}.
    \item It incrementally finds \emph{a finer cover}\footnote{Here we mean a cover using smaller regions and not strictly a refinement of a cover.}  of $\States$ for Case~\ref{case:spur}.
\end{inparaenum}
Overall, our Lyapunov verification algorithm contains three components:
\begin{compactitem}
    \item Construction of regions from the current set of samples
    \item Verification of Condition~\eqref{cond:regional-lya-bb} via Satisfiability-Modulo-Theory (SMT) queries
    \item Construction of counterexample samples
\end{compactitem}

\paragraph{Construction of Regions}
Our insight is to use samples as vertices of regions, i.e., a triangulation of sampled states,
so it is easy to obtain samples and the regions at the same time.
We choose Delaunay triangulation which generates a cover composed of \emph{simplices}.
Formally, given a set of samples $S=\{(\xj_j,\yj_j)\}_j$ with at least $n+1$ samples,
we can construct a Delaunay triangulation $\Cover=\{\Ri\}$ from $X=\{\xj_j \mid (\xj_j,\yj_j) \in S\}$,
and each region
\(
\Ri = \{\sum_{j=0}^n \lambda_j \xj_j \mid \bigwedge_{j=0}^n 0 \leq \lambda_j \leq 1 \land \sum_{j=0}^n \lambda_j = 1 \}
\)
is a simplex defined by $n+1$ affinely independent vertices $\xj_0,\dotsc,\xj_n \in X$.
Each region is the convex hull of its vertices by definition, which is recommended by \cref{thm:provable-convex}.
For any two distinct regions $\Reg_i\neq\Reg_{i'} \in \Cover$, their interiors,  $\INT(\Reg_i)$ and $\INT(\Reg_{i'})$, are disjoint.
Hence, these regions \emph{overlap only on facets} as recommended by \cref{remark:cover}.
Additionally, the Delaunay triangulation method supports \emph{incrementally adding more samples} to generate finer triangulations.
An example triangulation of $\States$ is shown in \cref{fig:vdp-cover} in \cref{sec:eval}.

\paragraph{Verification with SMT Queries}
With a triangulation $\Cover$, we use an SMT solver to falsify Condition~\eqref{cond:regional-lya-bb} for each simplex $\Reg$ defined by $S = \{(\xj_j, \yj_j)\}_{j=0\dotsc n}$.
The SMT query to falsify Condition~\eqref{cond:regional-lya-bb} is to search for a state $\vx \in \Reg$ by seeking $\lambda_0,\dotsc,\lambda_n$ that satisfies the formula below:
\begin{small}
\begin{multline*}
\exists \lambda_0 \in \Reals,\dotsc,\lambda_n \in \Reals, \bigwedge\nolimits_{j=0}^n 0 \leq \lambda_j \leq 1 \land \sum\nolimits_{j=0}^n \lambda_j = 1\ \land \\
\text{let } \vx = \sum\nolimits_{j=0}^n \lambda_j \xj_j,\quad
\vx \in \States \land \bigwedge\nolimits_{j=0}^n \norm{\derlya}\lipreg\norm{\vx-\xj_j} + \derlya \cdot \yj_j \geq 0
\end{multline*}
\end{small}%
The complexity of each SMT query depends on $\lya$ and $\gradlya$.
If $\lya$ is a polynomial of $\vx$, then the SMT query is decidable but NP-hard as discussed in Appendix~\ref{appx:equi-sat}.

\paragraph{Counterexample Construction}
If the SMT solver returns one or more satisfiable assignments for $\lambda_0,\dotsc,\lambda_n$,
then $\lya$ violates Condition~\eqref{cond:regional-lya-bb}.
We compute the counterexample state(s) $\vx_c = \sum_{j=0}^n \lambda_j \xj_j$,
obtain $\vy_c = \fbb(\vx_c)$, and add $(\vx_c, \vy_c)$ to the set of samples $S$.
To distinguish between Cases~\ref{case:true} and~\ref{case:spur},
we simply evaluate $\gradlya(\vx_c)\cdot\vy_c$.
If $\gradlya(\vx_c)\cdot\vy_c \geq 0$, then we know $\lya$ is falsified~(Case~\ref{case:true}).
Otherwise~(Case~\ref{case:spur}),
the incremental triangulation is constructed from new samples.
We also consider an inconclusive SMT query, e.g., due to time limits.
Our design, which preserves soundness, selects $\vx_c = \sum_{j=0}^n \xj_j/(n+1)$ to refine the cover.

\paragraph{Neither $\dist$-provable nor Falsified Candidates}
For a user-specified $\dist$, we can stop the verifier when we cannot falsify $\lya$ but still find a counterexample in a small enough simplex $\Reg$, i.e., \(\mathrm{diam}(\Reg) \leq \dist/2\cdot\sqrt{2(n+1)/n}\).
According to \cref{thm:provable-convex},
we have shown that $\lya$ is not $\dist$-provable,
and the user may lower the $\dist$ value.


\section{Learning in Convex Sets}\label{sec:learner}
In this section, we aim to design a learner for our \CEGIS algorithm ensuring the termination.
We achieve this by reducing the Lyapunov function synthesis problem to the \emph{convex feasibility problem}~(\cref{def:convex-feasibility}).
This will allow us to learn a Lyapunov function by cutting-plane methods in \cref{sec:cegus}.

We first provide our design choice on the hypothesis space for the learner,
i.e., the parameterized function template $\lya_\vpara$.
We describe our criteria on the function template and show a nontrivial example template satisfying our criteria.
We then provide an analytic center-based learner using samples and convex optimization.
The learner not only proposes a candidate compatible with the given set of samples
but also reports when no compatible candidate exists.

\subsection{Convex Set of Solutions in Hypothesis Space}\label{subsec:convex-hyp}

We first slightly abuse the notations and define two parameterized function templates $g_\vx: \Reals^d \to \Reals$ and $h_\vx: \Reals^d \to \Reals$ where $\vpara$ becomes the input to the functions.
That is,
\(g_\vx(\vpara) \coloneqq -\lya_\vpara(\vx)\) and
\(h_\vx(\vpara) \coloneqq \gradlya_\vpara(\vx) \cdot \fbb(\vx)\).
Our design choice is to enforce that both $g_\vx$ and $h_\vx$ are \emph{convex functions with respect to $\vpara$} for any state $\vx \in \States$.
This design choice ensures that the set of true Lyapunov functions in the hypothesis space is a convex set.
As a result, this allows us to reduce the Lyapunov synthesis problem to a convex feasibility problem.

More precisely, we define two sets,
$\HypsPD \coloneqq \{\vpara \mid \forall \vx \in \States. \lya_\vpara(\vx) > 0\}$ and 
$\HypsSTB \coloneqq \{\vpara \mid \forall \vx \in \States. \gradlya_\vpara(\vx) \cdot \fbb(\vx) < 0\}$.
By definition, $\HypsPD$ encodes all positive definite functions (Condition~\eqref{cond:lya-pd}) in the hypothesis space,
and $\HypsSTB$ encodes the set of all functions satisfying Condition~\eqref{cond:lya-stab}.
The Lyapunov synthesis problem is thus to search for $\vpara \in \HypsPD \cap \HypsSTB$.
Now we know $\HypsPD = \bigcap_{\vx\in\States} \{\vpara \mid g_\vx(\vpara) < 0\}$ by definition.
Because $g_\vx$ is convex, $\{\vpara \mid g_\vx(\vpara) < 0\}$ for every $\vx\in\States$ is a convex set, so their intersection $\HypsPD$ is also a convex set.
Similarly, because $h_\vx$ is convex, $\HypsSTB=\bigcap_{\vx\in\States} \{\vpara \mid h_\vx(\vpara) < 0\}$ is also a convex set.
Hence, the Lyapunov synthesis problem is a convex feasibility problem under this design.

To avoid analyzing the convexity of $h_\vx$ which may require a closed-form expression of $\fbb$,
we can use a stronger requirement that $g_\vx$ is \emph{linear with respect to $\vpara$}.
Because $\lya_\vpara(\vx) = -g_\vx(\vpara)$,
$\lya_\vpara$ is linear with respect to $\vpara$ as well.
For any two parameters $\vpara_1, \vpara_2 \in\Reals^d$ and two scalars $\alpha_1, \alpha_2 \in\Reals$,
\begin{align*}
     h_{\vx}(\alpha_1\vpara_1 + \alpha_2\vpara_2)
    &= \gradlya_{\alpha_1\vpara_1 + \alpha_2\vpara_2}(\vx) \cdot \fbb(\vx) \\
    &= [\frac{\partial \lya_{\alpha_1\vpara_1 + \alpha_2\vpara_2}(\vx)}{\partial x_1}\,
        \dotsc
        \frac{\partial \lya_{\alpha_1\vpara_1 + \alpha_2\vpara_2}(\vx)}{\partial x_n}
       ] \cdot \fbb(\vx) \\
    &= [\frac{\partial \alpha_1\lya_{\vpara_1}(\vx) + \alpha_2\lya_{\vpara_2}(\vx)}{\partial x_1}\,
        \dotsc
        \frac{\partial \alpha_1\lya_{\vpara_1}(\vx) + \alpha_2\lya_{\vpara_2}(\vx)}{\partial x_n}
       ] \cdot \fbb(\vx) \\
    &= \alpha_1\gradlya_{\vpara_1}(\vx)\cdot \fbb(\vx) + \alpha_2\gradlya_{\vpara_2}(\vx)\cdot \fbb(\vx) \\
    &= \alpha_1 h_{\vx}(\vpara_1) + \alpha_2 h_{\vx}(\vpara_2)
\end{align*}
Then, $\gradlya_\vpara$ is linear with respect to $\vpara$ by definition.
Therefore, $h_\vx$ is also linear with respect to $\vpara$.
It may seem very restrictive to require the linearity with respect to $\vpara$.
Here, we give a concrete nontrivial template with linearity.

\newcommand{\Tanh}{\ensuremath{\mathrm{Tanh}}}
\begin{example}[Templates with Transcendental Functions]\label{ex:quad-lya}
Let $\Tanh(\vx)$ denote applying $\tanh$ on each element $x_i$ in $\vx$.
Consider learning from a template function, 
\(
\lya_\vpara(\vx) = \Tanh(\vx)^T \Theta \Tanh(\vx)
\),
where $\vpara \in \Reals^{n^2}$ is the flatten vector of the $n\times n$ matrix $\Theta$.
It is easy to show that $g_\vx(\vpara)=-\lya_\vpara(\vx)$ is, in fact, linear with respect to the flattened vector $\vpara$.
Formally, given a state $\vx \in \States$, 
consider any two parameters $\vpara_1$ and $\vpara_2$ and two scalars $\alpha_1, \alpha_2 \in\Reals$:
\begin{small}
\[
g_{\vx}(\alpha_1\vpara_1 + \alpha_2\vpara_2)
    = -\Tanh(\vx)^T (\alpha_1 \Theta_1 + \alpha_2 \Theta_2) \Tanh(\vx)
    = \alpha_1 g_{\vx}(\vpara_1) + \alpha_2 g_{\vx}(\vpara_2)
\]
\end{small}%
We see that the non-convex function $\Tanh$ does not affect the convexity over $\vpara$.
Further, we can easily compute the gradient thanks to the linearity.
By expanding the matrix multiplication and the function $\Tanh$,
\begin{align*}
g_{\vx}(\vpara)
    &= -\Tanh(\vx)^T \Theta \Tanh(\vx) \\
    &= -\sum_{i=1}^n \sum_{j=1}^n \tanh(x_i){\cdot}\tanh(x_j){\cdot}\theta_{(i-1){\cdot}n + j}
\end{align*}
Hence, for $g_\vx(\vpara)$ with $\vx=[x_1\dotsc x_n]^T$,
we can compute the (sub)-gradient $\va = [a_1\dotsc a_{n^2}]^T \in \Reals^{n^2}$ as below:
\[
a_{(i-1){\cdot}n + j} = -\tanh(x_i){\cdot}\tanh(x_j), \quad \text{ for } i=1...n, j=1...n
\]
Templates in other existing works,
e.g., linear combinations of monomials in~\cite{ravanbakhsh_learning_2019}
and semidefinite matrices in~\cite{chen_learning_2021-1},
are also linear with respect to the parameters.
\end{example}

\subsection{Analytic-Center--Based Learner}\label{subsec:learn}

Now, we design an analytic center-based learner for \CEGIS.
Without loss of generality, we assume the initial parameter set $\Hyps_0 \subseteq \Reals^d$ is a unit hypercube center at $\zero$.
We represent
\(
\Hyps_0 = \bigcap_{i=1}^d \{\vpara \mid \ve_i\cdot\vpara < -\frac{1}{2}\} \cap \{\vpara \mid \ve_i\cdot\vpara < \frac{1}{2}\}
\)
as the intersection of halfspaces
where $\ve_i$ is the $i$-th basis vector of $\Reals^d$.
Recall the observation compatibility in \cref{def:compatible}.
Given the samples $S=\{(\xj_i, \yj_i)\}_i$ with its size $|S|$,
we define $\Hyps_S$ as the set of all parameters compatible with $S$,
i.e.,
\(
\Hyps_S 
    = \Hyps_0 \cap \bigcap_{i=1}^{|S|} \{\vpara \mid \lya_\vpara (\xj_i) > 0 \land \gradlya_\vpara(\xj_i) \cdot \yj_i < 0\}
\).
$\Hyps_S$ is a polytope formed by an intersection of halfspaces
using the definitions in \cref{subsec:convex-hyp},
so checking the strict feasibility of $\Hyps_S$ is therefore solvable with linear programming.
If $\Hyps_S$ is not strictly feasible,
we know no candidate is compatible with the samples $S$.

Otherwise,
the learner computes the analytic center of $\Hyps_S$ by solving the following convex optimization~\cite{goffin_complexity_1996}:
\begin{small}
\begin{equation*}
\cand = \arg\max_{\vpara\in\Reals^d}
       \sum\limits_{i=1}^d \ln{(\frac{1}{2} + \ve_i \cdot \vpara)} + \ln{(\frac{1}{2} - \ve_i \cdot \vpara)} +
       \sum\limits_{i=1}^{|S|} \ln(\lya_\vpara (\xj_i)) + \ln(-\gradlya_\vpara(\xj_i) \cdot \yj_i))
\end{equation*}
\end{small}%
The learner then proposes $\lya_{\cand}$ as the candidate function.
We are now ready to describe and analyze our \CEGIS algorithm in \cref{sec:cegus}.


\section{Black-Box \CEGIS}\label{sec:cegus}

\begin{figure}[t]
    \centering
    \scalebox{0.9}
    {\tikzstyle{process} = [rectangle, minimum width=15mm, minimum height=8mm, draw=black]
\tikzstyle{verify} = [rectangle, text width=33mm, minimum height=13mm, draw=black, fill=white]
\tikzstyle{decision} = [diamond, minimum width=15mm, minimum height=0.5cm, aspect=3, text centered, draw=black]
\tikzstyle{sum} = [circle, minimum width=0.25cm, minimum height=0.25cm, text centered, draw=black]
\tikzstyle{arrow} = [thick,->,>=stealth]

\pgfdeclarelayer{bg}
\pgfsetlayers{bg,main}

\begin{tikzpicture}[node distance=15mm,scale=1.0,every node/.style={transform shape}]
\node (init) {};
\node (learn) [process, text width=38mm, right=13mm of init] {%
    \textbf{Learner}\\ Propose $\lya_\vpara$ compatible with $S$ by analytic center};

\node (verify2) [verify, below=8mm of learn, xshift=-1mm, yshift=2mm] {};
\node (verify1) [verify, below=8mm of learn, xshift=0mm, yshift=0mm] {};
\node (verify0) [verify, below=8mm of learn, xshift=1mm, yshift=-2mm] {%
    \textbf{Regional Verifier}\\ Check Condition~\eqref{cond:regional-lya-bb} on $\lya_\vpara$ with $\Ri$ and $S_i$};

\node (cover) [process, text width=20mm, above right=-2mm and 8mm of verify1] {%
    \textbf{Covering}\\ Build cover $\Cover$};

\node (union) [sum, below right=-5mm and 19mm of verify1] {$\cup$};

\node (add)[process, text width=25mm, right=38mm of learn] {%
    Add samples \\
    $S \gets S \cup S_c$};
\node (sample)  [process, text width=27mm] at (add |- union) {%
    Obtain $\yj=\fbb(\xj)$ for each $\xj \in \bigcup X_i$};

\draw [arrow] (init) -- node [above,text width=13mm] {Initial \\ $S\gets S_0$} (learn);
\draw [arrow] (learn) -- node [left, align=right] {Candidate $\lya_\vpara$} (verify1);

\begin{pgfonlayer}{bg}

\draw [arrow] (cover) -- (verify2);
\draw [arrow] (cover) -- node [below right] {$\Ri \in \Cover, S_i \subseteq S$} (verify1);
\draw [arrow] (cover) -- (verify0);

\draw [draw=none] (cover) -- node [rectangle, dashed, draw=black,
    minimum width=7cm, minimum height=23mm, xshift=-24mm, yshift=0.5mm,
    label={[anchor=north]\large\textbf{Verifier}}] {} (union);

\draw [arrow] (verify2) -- (union);
\draw [arrow] (verify1) -- node[below] {$X_i$} (union);
\draw [arrow] (verify0) -- (union);
\end{pgfonlayer}

\draw [arrow] (union) -- (sample);
\draw [arrow] (sample) -- node[right, xshift=-8mm] {New Samples $S_c$} (add);
\draw [arrow] (add) -- node[above] {$S$} (learn);
\draw [arrow] (add) -- node[above left] {$S$} (cover);

\end{tikzpicture}}
    \caption{Architecture for black-box \CEGIS of Lyapunov functions. The detailed decision flow is in \cref{alg:ceglya}.}
    \label{fig:bbcegus-flow}
\end{figure}

\newcommand{\FBBCEGuS}{\ensuremath{\mathsf{BBCEGuS}}\xspace}
\newcommand{\FLearnV}{\ensuremath{\mathsf{Learner}}\xspace}
\newcommand{\FGenCover}{\ensuremath{\mathsf{Covering}}\xspace}
\newcommand{\FEstLip}{\ensuremath{\mathsf{EstimateLipschitz}}\xspace}
\newcommand{\FCexInRj}{\ensuremath{\mathsf{RegionalVerifier}}\xspace}

In this section, we explain our general \CEGIS architecture to find a Lyapunov function using our verifier in \cref{subsec:cegar} and our learner in \cref{subsec:learn}.
\cref{fig:bbcegus-flow} sketches the three major components and the data exchange in our architecture:
\begin{compactitem}
    \item \FLearnV: propose candidates using analytic centers,
    \item \FGenCover: generate a cover of the region of interest $\States$,
    \item \FCexInRj: verify each regional verification condition in parallel.
\end{compactitem}
Compared with \cref{fig:wbcegus-flow},
only the verifier is modified to build a cover of the state space for checking our regional verification conditions for each region.
We first discuss \cref{alg:ceglya}, an iterative \CEGIS algorithm, in \cref{subsec:iter-cegus}.
We then provide a termination guarantee of our \CEGIS algorithm based on the termination guarantee of \ACCPM in \cref{subsec:accpm}.
Lastly, we describe our implementation and techniques to speed up our \CEGIS algorithm in \cref{subsec:impl}.

\subsection{Iterative \CEGIS Algorithm for Black-Box Systems}\label{subsec:iter-cegus}

\begin{algorithm}[t]
\caption{Iterative \CEGIS for black-box systems.}\label{alg:ceglya}

\begin{algorithmic}[1]
\State $\fbb, \lipbb, \States\subseteq \Dom \setminus \{\zero\}$, $S_0$, $\dist$ and $\rob$ are given. $\lipreg$ for some regions $\Reg$ are optional.
\State $\mathsf{diam\_thres} \gets \frac{\dist}{2}\sqrt{\frac{2(n+1)}{n}}$ \Comment{Diameter threshold from $\dist$}
\State $\mathsf{max\_k} \gets \min\limits_{k\in\mathbb{N}} k \text{ subject to } \frac{\rob^2}{d} \geq  \frac{\frac{1}{2} +2d\ln(1+\frac{k+1}{8d^2})}{2d+k+1}$ \Comment{Iteration limit from $\rob$}
\State $S \gets S_0;\quad \Cover \gets \{\Dom\}$ \Comment{Initial dataset and a trivial cover}

\For{$k \gets 1\dotsc\mathsf{max\_k}$} \label{line:loop-head}
    \State $S_L \gets \{(\xj, \yj) \in S \mid \xj \in \States\}$ \Comment{Use only samples in \ROI for learning}
    \State $\lya_\vpara \gets \FLearnV(S_L)$ \label{line:learn}
    \If{$\lya_\vpara = \bot$}
        \Return ``No Lyapunov functions" \label{line:no-lya}
    \EndIf
    \State $\mathsf{stop\_refine} \gets \mathsf{false}$
    \Repeat \Comment{Refinement loop to verify $\lya_\vpara$}
        \If{$\mathsf{stop\_refine}$}
            \Return ``$\lya_\vpara$ is neither $\dist$-provable nor falsified." \label{line:no-cover}
        \EndIf

        \State $\Cover \gets \FGenCover(\Cover, \States, S)$  \Comment{Update cover with samples}
        \ForAll{$\Ri \in \Cover$ \label{line:inner}}
            \State $S_i \gets \{ (\xj, \yj) \in S \mid \xj \in \Ri\}$
            \State $\lipbb_i \gets (\exists \Reg. \Reg_i\subseteq \Reg)?\; \lipreg: \lipbb$ \Comment{Pick smaller Lipschitz bounds}
            \State $X_i \gets \FCexInRj(\lya_\vpara, \States, \Ri, S_i, \lipbb_i)$
            \If{$(X_i \neq \emptyset \land \mathrm{diam}(\Reg_i) \leq \mathsf{diam\_thres}$}
                \State $\mathsf{stop\_refine} \gets \mathsf{true}$
            \EndIf
        \EndFor
        \If{$\bigcup X_i = \emptyset$}
            \Return $\lya_\vpara$  \Comment{Found a Lyapunov function} \label{line:found}
        \EndIf
        \State $S_c \gets \{(\xj, \fbb(\xj)) \mid \xj \in \bigcup X_i\}; \quad S \gets S \cup S_c$ \Comment{Get new samples}
    \Until{$\exists (\xj, \yj)\in S_c.\ \xj\in\States \land \lya_\vpara(\xj) \leq 0 \land \gradlya_\vpara(\xj)\cdot \yj \geq 0$}\label{line:falsify}
    \State \Comment{$\lya_\vpara$ is falsified. Continue to learn a new candidate}
\EndFor
\State \Return ``No $\rob$-robust $\dist$-provable Lyapunov functions."\label{line:terminate}
\end{algorithmic}

\end{algorithm}

\cref{alg:ceglya} shows the pseudocode of an iterative implementation.
It takes as input
an executable function $\fbb:\Reals^n\mapsto\Reals^n$,
a Lipschitz bound $\lipbb$ in domain $\Dom$,
and the \ROI $\States \subseteq \Dom \setminus \{\zero\}$.
It can be configured with
an initial set of samples $S_0$,
a threshold $\dist$ for $\dist$-provability to derive the diameter threshold $\mathsf{diam\_thres}$,
a robustness parameter $\rob$ to derive the iteration limit $\mathsf{max\_k}$,
and and optionally regional Lipschitz bounds $\lipreg$ for some regions $\Reg$.

Overall, \cref{alg:ceglya} seeks for both a Lyapunov candidate $\lya_\vpara$ and a cover $\Cover$, and it repeatedly learns $\lya_\vpara$ and updates $\Cover$ using counterexamples until they pass checks by \FCexInRj for all regions.
It starts with an initial dataset $S=S_0$ and uses $S$ to learn a new Lyapunov candidate $\lya_\vpara$.
If the learner can learn a new candidate~($\lya_\vpara\neq\bot$),
it verifies this Lyapunov candidate $\lya_\vpara$ by refining the current cover $\Cover$ and checking each region $\Ri$,
and it obtains a possibly empty set of counterexamples $\bigcup X_i$.
It then updates the dataset and repeats until any new data falsifies the current candidate (\cref{line:falsify}).
It will then leave the repeat-loop and learn a new candidate.
\cref{alg:ceglya} terminates in the following situations:
\begin{compactitem}
    \item \FLearnV{} can't find a candidate in its hypothesis space (\cref{line:no-lya}).
    \item There is no counterexample, so we found a Lyapunov function $\lya_\vpara$ (\cref{line:found}).
    \item The diameter of any region that needs refinement is too small (\cref{line:no-cover}).
    \item It reaches the limit of iterations and hence no $\rob$-robust candidates (\cref{line:terminate}).
\end{compactitem}
Therefore, \cref{alg:ceglya} either synthesizes a Lyapunov function
or concludes the absence of a $\rob$-robust $\dist$-provable candidate on termination.

\subsection{Termination by \ACCPM}\label{subsec:accpm}

Recall from \cref{subsec:convex-hyp}: Both $g_\vx$ and $h_\vx$ are linear with respect to $\vpara$ by design,
and finding a Lyapunov function is to find $\vpara \in \HypsPD \cap \HypsSTB$,
i.e., a convex feasibility problem.
Our goal is to show that \cref{alg:ceglya} solves the convex feasibility by \ACCPM~\cite{atkinson_cutting_1995},
which always terminates according to \cref{thm:accpm}.
We show that the verifier serves as a separating oracle,
and the learner from \cref{subsec:learn} finds the analytic center of the polytope defined by the separating hyperplanes.

Assuming at the $k$-th iteration that the learner proposes $\lya_{\cand_k}$ at \cref{line:learn},
so $\cand_k$ is the analytic center of $\Hyps_{S_k}$.
The verifier then falsifies $\lya_{\cand_k}$ at \cref{line:falsify}.
A counterexample $\xj$ and $\yj=\fbb(\xj)$ shows that $\lya_{\cand_k}$ violates either Condition~\eqref{cond:lya-pd} or~\eqref{cond:lya-stab},
i.e., $\lya_{\cand_k}(\xj) \leq 0$ or $\gradlya_{\cand_k}(\xj)\cdot \yj \geq 0$,
which implies $\cand_k \notin \HypsPD \cap \HypsSTB$.
The separating hyperplane between $\cand_k$ and $\HypsPD \cap \HypsSTB$ can be constructed as follows.
If $\cand_k \notin \HypsPD$, we know $g_{\xj}(\cand_k) \geq 0$ because $\lya_{\cand_k}(\xj) \leq 0$.
Since $g_{\xj}$ is linear,
we find a separating hyperplane $(\nabla g_{\xj}(\cand_k), 0)$ according to Proposition~\ref{prop:grad-plane}.
Similarly, If $\cand_k \notin \HypsSTB$,
we find a separating hyperplane $(\nabla h_{\xj}(\cand_k), 0)$.
Note that $(\nabla g_{\xj}(\cand_k), 0)$ or $(\nabla h_{\xj}(\cand_k), 0)$ is only for proof and never explicitly constructed.

We now consider $\lya_{\cand_{k+1}}$ learned from the samples $S_{k+1}$ at the $(k+1)$-th iteration.
For simplicity, we consider that  only one pair of a counterexample state $\xj$ and its output $\yj$ is added to $S$, i.e., $S_{k+1} = S_k \cup \{(\xj, \yj)\}$.
By design,
\begin{small}
\begin{align*}
\Hyps_{S_{k+1}}
    &= \Hyps_{S_k} \cap \{\vpara\mid g_{\xj}(\vpara) < 0\} \cap \{\vpara\mid h_{\xj}(\vpara) < 0\} \\
    &= \Hyps_{S_k} \cap \{\vpara\mid \nabla g_{\xj}(\cand_k) \cdot\vpara < 0\}
                   \cap \{\vpara\mid \nabla h_{\xj}(\cand_k) \cdot\vpara < 0\}
\end{align*}
\end{small}%
Therefore, the new polytope $\Hyps_{S_{k+1}}$ excludes $\cand_k$,  the analytic center of $\Hyps_{S_k}$, using the hyperplanes $(\nabla g_{\xj}(\cand_k), 0)$ or $(\nabla h_{\xj}(\cand_k), 0)$,
and \cref{alg:ceglya} actually implements \ACCPM.
Lastly, we know the solution set $\HypsPD \cap \HypsSTB$ does not contain a $\rob$-ball after $\mathsf{max\_k}$ iterations by \cref{thm:accpm}.
This is similar to Lyapunov candidates with robust compatibility defined in~\cite[Definition 7]{ravanbakhsh_learning_2019}.

\subsection{Implementation and Speed Up}\label{subsec:impl}

We implemented a prototype of our \CEGIS algorithm in Python and released it on GitHub at \url{https://github.com/CyPhAi-Project/pricely} as open-source software.
We use existing convex optimization libraries \CVXPY~\cite{diamond_cvxpy_2016} for the learner and the \dReal SMT solver~\cite{gao_dreal_2013} for the verifier.
For the hypothesis space, we use the template $\lya_\vpara(\vx) = \frac{1}{2} \vx^T \Theta \vx$
where $\Theta$ is a \emph {symmetric} $n\times n$ matrices constructed from the parameter $\vpara\in \Reals^d$ with $d = n(n+1)/2$.
Additional details on the convex optimization for finding analytic centers are available in \cref{appx:learner}.

We implement two simple techniques to speed up \cref{alg:ceglya}, parallelizing and caching SMT queries.
Observe the for-loop at \cref{line:inner},
the regional verification for each individual region does not depend on the results of other regions.
Therefore, we can parallelize the loop and solve SMT queries in parallel.
We further cache the SMT query for a region where no counterexamples are found, i.e. UNSAT.
If the region is not modified after updating the cover,
we immediately know that the same SMT query is UNSAT and avoid solving the query again.


\newcommand{\lb}{0.1}
\newcommand{\ub}{\ensuremath{r}}
\newcommand{\YES}{\checkmark}
\newcommand{\NO}{}
\newcommand{\INC}{{$\triangle$}}
\newcommand{\UNK}{{$-$}}
\newcommand{\TO}{\textcolor{red}{>4 hrs}}
\newcommand{\SL}{\textcolor{red}{>500k}}
\newcommand{\mrow}[2]{\multirow{#1}{*}{{#2}}}
\newcommand{\mcol}[2]{\multicolumn{#1}{|c|}{{#2}}}
\newcommand{\others}{\textup{\texttt{Trans}}\xspace}
\newcommand{\fossil}{\textup{\texttt{Polys}}\xspace}

\section{Evaluation}\label{sec:eval}

For evaluation, we use two groups of benchmarks, namely \others and \fossil,
to evaluate the performance of our black-box \CEGIS approach.
For \others listed in \cref{tab:exp-others}, we study 4 benchmarks that are \emph{locally stable},
and $\fbb$ in each benchmark may consist of \emph{transcendental functions}.
We use all 3 benchmarks in~\cite{zhou_neural_2022}, namely Van der Pol, unicycle path following, and inverted pendulum.
The 4\textsuperscript{th} benchmark is the Stanley path following controller~\cite{hoffmann_autonomous_2007}.
The vehicle dynamics for~\cite{hoffmann_autonomous_2007} contains sine and cosine,
and the controller is \emph{piecewise continuous} for input with saturation.
We note that the unicycle path following and inverted pendulum from~\cite{zhou_neural_2022} were intended for control synthesis.
Our setup instead certifies \emph{the system with the neural network controller} synthesized by~\cite{zhou_neural_2022}.

For \fossil listed in Table~\ref{tab:exp-fossil},
we aim to study how our approach is impacted by the size of the region of interest $\States$ using \emph{globally stable} systems with polynomial functions $\fbb$.
We pick 8 of 17 benchmarks from FOSSIL~\cite[Table~1]{abate_fossil_2021} and exclude the other 9 benchmarks that are not for Lyapunov synthesis or not continuous nonlinear dynamic.
To study how the \ROI $\States$ impacts our \CEGIS approach,
we follow the setup in~\cite{abate_formal_2021,abate_fossil_2021} to specify $\States = \{\vx \mid \lb \leq \norm{\vx} \leq \ub\}$ excluding a ball of radius \lb{} and varying $\ub$ between 1, 5, and 10.

For all benchmarks, we configure our tool with an initial set $S_0$ containing $6^n$ evenly spaced samples,
the $\dist$-provability threshold $\dist= 10^{-4}$,
i.e., $\mathsf{diam\_thres}=\frac{10^{-4}}{2}\sqrt{\frac{2(n+1)}{n}}$,
and the iteration limit $\mathsf{max\_k}=40$.
Details for each benchmark, such as the Lipschitz bounds $\lipbb$ and $\lipreg$ and the \ROI $\States$, are in Appendix~\ref{appx:benchmarks}.
All experiments are run on an Ubuntu workstation with 20 cores and 256GB RAM.

\subsection{Evaluation Result}

\begin{table}[t]
    \centering
    \caption{Performance on \others benchmarks from~\cite{zhou_neural_2022,hoffmann_autonomous_2007}.
    Time limit: 4 hours.
    Sample limit: 500K.
    Iteration limit: 40.}\label{tab:exp-others}
    \begin{tabular}{|l|c|r|r|r|r|r|}
        \hline
                 &      &   Time & \mcol{2}{Learn} & \mcol{2}{Verify} \\
        \cline{4-7}
        Name     &      &  (sec) & k &$|S_L|$& $|S|$ & $|\Cover|$ \\
        \hline
        Van der Pol
                 & \YES &   1.63 & 1 &    16 &   488 &    954 \\
        \hline
        Unicycle path
                 & \YES &  25.21 & 1 &    16 &  9825 &  19628 \\
        \hline
        Inverted pendulum
                 & \INC*&  54.55 & 1 &    16 & 78182 & 156342 \\
        \hline
        Stanley controller
                 & \YES &   7.79 & 1 &    36 &   829 &   1621 \\
        \hline
    \end{tabular}
\end{table}

Here we discuss the experiment results on the two groups of benchmarks.
\cref{tab:exp-others} provides the results for \others.
\cref{tab:exp-fossil} provides the results for \fossil with varying radii $\ub$ of $\States$.
In both tables, we report the \CEGIS outcome, the time usage, the number of \CEGIS iterations $k$,
the number of samples for learner $|S_L|$,
the number of samples for verifier $|S|$,
and the number of simplices in the cover $|\Cover|$.
The \CEGIS outcome is denoted as follows:
`\YES{}' means we synthesized a $\dist$-provable Lyapunov function.
`\INC{}' means we found a candidate that is neither $\dist$-provable nor falsified.
and `\UNK{}' means unknown outcome due to time or sample limits.
We further validate the last Lyapunov candidate for `\INC{}' or `\UNK{}' with a \dReal-based white-box verifier as in~\cite{chang_neural_2019},
and `*' means the last candidate is actually a Lyapunov function.

\paragraph{Results for \others}
Our prototype found Lyapunov functions for 3 out of 4 benchmarks.
For the inverted pendulum with a NN controller from~\cite{zhou_neural_2022},
we validate the candidate which is not $\dist$-provable to be a true Lyapunov function.
Our investigation shows that the NN controller reacts very rapidly to stabilize the pendulum;
hence, the Lipschitz bounds are at the order of $10^4$,
in contrast to the bounds of others at the order of 10 to $10^2$.
This requires a triangulation with simplices smaller than $\mathsf{diam\_thres}\approx8.7\cdot10^{-5}$ for verification;
hence our prototype stops.

\begin{figure}[t]
    \centering
    \includegraphics[height=35mm]{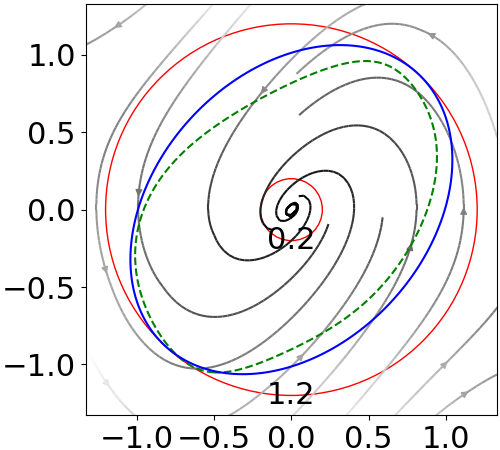}
    \hspace{1em}
    \includegraphics[height=35mm]{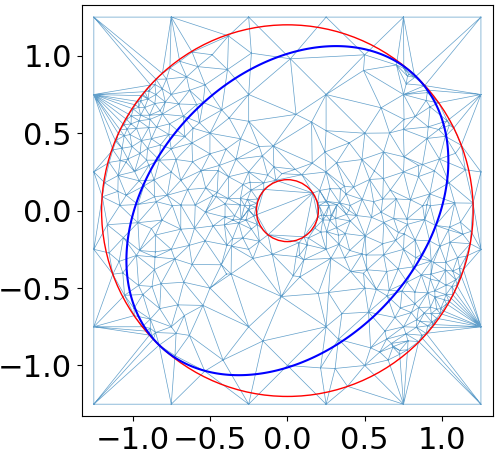}
    \caption{Comparison on Van der Pol. Phase portrait with BOAs~(\emph{Left}) and final triangulation covering $\States$~(\emph{Right}).
    The disk between the two \textcolor{red}{red} circles is $\States$. The \textcolor{blue}{blue} ellipse is our BOA,
    and the \textcolor{OliveGreen}{dashed green} contour is the BOA by~\cite{zhou_neural_2022}.}
    \label{fig:vdp-cover}
\end{figure}

We further studied the Van der Pol benchmark and roughly compared\footnote{
Due to the difference in the target problem, we do not compare the computation time because a fair comparison is difficult. See \cref{appx:subsec:compare-neurips2022} for the details.} our approach with~\cite{zhou_neural_2022}.
\cref{fig:vdp-cover}~(\emph{Left}) shows the basin of attractions~(BOA) in $\States$ from our prototype versus~\cite{zhou_neural_2022},
and both BOAs cover roughly the same neighborhood around the origin.
In~\cite{zhou_neural_2022},
the authors used $3000\times3000=9\cdot10^6$ evenly-spaced samples in the $(-1.5, 1.5)\times(-1.5, 1.5)$ box to ensure $\delta\leq5\cdot10^{-4}$.
In comparison, we use \emph{479 samples} and 936 simplices to find a $\dist$-provable Lyapunov function as seen in \cref{tab:exp-others}.
We calculated the diameters of simplices in the triangulation shown in \cref{fig:vdp-cover}~(\emph{Right}),
and the diameter ranges from 0.0225 to 0.578.
This suggests that
$\dist = 0.0225\cdot2/\sqrt{n/2(n+1)} \approx 0.0259 \;(n=2)$ is small enough by \cref{thm:provable-convex},
and Condition~\eqref{cond:regional-lya-bb} holds for plenty of larger simplices.
This showcases how our approach reduces the number from millions to hundreds of samples.

\begin{table}[t]
    \centering
    \caption{Performance on \fossil benchmarks from FOSSIL~\cite{abate_fossil_2021}.
    Time limit: 4 hours.
    Sample limit: 500k.
    Iteration limit: 40.}\label{tab:exp-fossil}
    \scriptsize
    \begin{tabular}{|l|r|c|r|r|r|r|r|}
        \hline
            Name &      &      &    Time  &\mcol{2}{Learn}&\mcol{2}{Verify}  \\
        \cline{5-8}
            $n$D &  \ub &      &    (sec) &  k &$|S_L|$&  $|S|$ & $|\Cover|$ \\
        \hline
        $\mathtt{nonpoly}_0$
                 &  $1$ & \YES &     0.80 &  1 &    16 &     53 &     84 \\
              2D &  $5$ & \YES &     2.79 &  2 &   245 &    620 &   1218 \\
                 & $10$ & \YES &     8.44 &  2 &    62 &   3579 &   7136 \\
        \hline
        $\mathtt{nonpoly}_1$
                 & $ 1$ & \YES &     0.78 &  1 &    16 &     96 &    170 \\
              2D & $ 5$ & \YES &  1520.35 &  6 & 11408 & 116843 & 233662 \\
                 & $10$ & \UNK*&       -- &  5 & 55338 &    \SL &     -- \\
        \hline
        $\mathtt{nonpoly}_2$
                 & $ 1$ & \YES &     7.81 &  1 &    56 &   1007 &   6092 \\
              3D & $ 5$ & \YES &   111.22 &  1 &    56 &   2776 &  17594 \\
                 & $10$ & \YES &  2819.90 &  5 & 25486 &  37675 & 243765 \\
        \hline
        $\mathtt{nonpoly}_3$
                 & $ 1$ & \YES &    48.07 &  1 &    56 &   2589 &  16353 \\
              3D & $ 5$ & \YES &   983.32 &  1 &    56 &  44415 & 280399 \\
                 & $10$ & \UNK*&      \TO & -- &    -- &     -- &     -- \\
         \hline
    \end{tabular}
    \hspace{1mm}
    \begin{tabular}{|l|r|c|r|r|r|r|r|}
        \hline
                   Name &      &      &    Time  & \mcol{2}{Learn}&\mcol{2}{Verify}       \\
        \cline{5-8}
            $n$D &  \ub &      &    (sec) &  k & $|S_L|$ & $|S|$ & $|\Cover|$ \\
        \hline
        $\mathtt{poly}_1$
                 & $ 1$ & \UNK*&       -- &  1 &      56 &   \SL &     -- \\
              3D & $ 5$ & \UNK*&       -- &  2 &     766 &   \SL &     -- \\
                 & $10$ & \UNK*&       -- &  2 &     742 &   \SL &     -- \\
        \hline
        $\mathtt{poly}_2$
                 & $ 1$ & \YES &    13.64 &  1 &    16 &    308 &    594 \\
              2D & $ 5$ & \YES &    22.76 &  7 &  1090 &   1183 &   2344 \\
                 & $10$ & \YES &   169.71 &  6 &  2583 &   3133 &   6244 \\
        \hline
        $\mathtt{poly}_3$
                 & $ 1$ & \YES &    25.43 &  1 &    16 &   2407 &   4792 \\
              2D & $ 5$ & \YES &    26.33 &  1 &    16 &   2664 &   5306 \\
                 & $10$ & \YES &   149.95 &  1 &    16 &   2762 &   5502 \\
        \hline
        $\mathtt{poly}_4$
                 & $ 1$ & \YES &    22.43 &  1 &    16 &   1111 &   2200 \\
              2D & $ 5$ & \YES &   619.95 &  2 &    62 &  92708 & 185394 \\
                 & $10$ & \UNK*&       -- &  3 &   604 &    \SL &     -- \\
        \hline
    \end{tabular}
\end{table}

\paragraph{Results for \fossil}
Our prototype successfully synthesized a Lyapunov function for 7 of 8 benchmarks when $\ub=1$ and $\ub=5$.
When $\ub=10$,
our prototype still succeeds for 4 benchmarks.
Moreover, even when our prototype terminated due to resource limits,
it still found candidates that are true Lyapunov functions.
As shown in \cref{tab:exp-fossil},
the 3D systems and a larger radius of $\States$ require a longer time.
This is because more samples $|S|$ are required for building a triangulation of $\States$.
This leads to more simplices in $\Cover$ and, thus, more SMT queries.
In comparison, the samples for learning $|S_L|$ are much fewer and stay the same for different $r$ for several benchmarks.
This is because a Lyapunov function is already found in the first iteration~($k=1$),
and $|S_L|$ does not increase since there will never be counterexamples for falsification.
In short, the analytic center-based learner is able to learn a good candidate with a few samples.
Our black-box verifier, however, may require a very fine triangulation with many SMT queries.


\section{Conclusion}\label{sec:conclusion}
In this paper, we presented a \CEGIS approach for certifying the Lyapunov stability of \emph{black-box} dynamical systems.
Our regional verification condition allows checking the Lie derivative of a Lyapunov function for a black-box system using counterexample-guided sampling.
We outline our design of the hypothesis space and the analytic center-based learner to efficiently synthesize a Lyapunov function,
and our \CEGIS algorithm guarantees the termination based on \ACCPM for solving convex feasibility.

Our evaluation showed that our approach is able to find a Lyapunov function with a few thousand samples for 2D and 3D systems for a small \ROI.
This is significantly fewer than the number of samples used in~\cite{zhou_neural_2022}.
The result also showed known scalability issues of black-box approaches:
The number of samples grows rapidly with respect to the system dimension and the size of \ROI.

The main assumption on known Lipschitz bounds can be addressed by integrating \emph{Lipschitz learning} methods~\cite{Wood1996EstimationOT,huang_sample_2023} into our \CEGIS flow.
Recent work~\cite{huang_sample_2023} not only estimates Lipschitz bounds for black-box functions but also provides theoretical bounds on the required number of samples.
Another assumption inherited from~\cite{zhou_neural_2022} is to collect samples arbitrarily.
In general,
\begin{inparaenum}[1)]
\item it can be costly to set the black-box system in arbitrary states for sampling, and
\item the system may never reach certain regions of states from normal initial conditions.
\end{inparaenum}
Synthesizing Lyapunov functions without this assumption will be an important future work.

Our work suggests quite a few extensions for certifying the stability of black-box systems.
Some obvious extensions are to handle more general systems, e.g., switched or hybrid systems,
to use different Lyapunov function templates,
e.g., piecewise affine functions~\cite{berger_counterexample-guided_2023}, rational polynomials~\cite{chesi_rational_2013}, etc.,
or to find the basin of attraction besides a Lyapunov function~\cite{chen_learning_2021}.
Controller synthesis, however, may not be a reasonable extension.
Due to the counterexample-guided nature, we suspect that such an approach will synthesize controllers that barely satisfy the stability.
Addressing the scalability issue for higher dimensional systems is another future direction.
Identifying and verifying only reachable regions of the state space may also help reduce the required number of samples~\cite{zhang_learning_2024}.
Recent advances in GPU computing motivate a brand-new design to accelerate \CEGIS with massive parallelization.


\subsubsection*{Acknowledgements.}
The authors would like to thank Prof. Kazumune Hashimoto from Osaka University and Dr. Aneel Tanwani from National Center of Scientific Research~(CNRS) in France for their valuable input on Lyapunov stability theory, existing works with SDP for searching Lyapunov candidates, and triangulation for partitioning.
This work is supported by the Japan Science and Technology Agency (Grant\# JPMJCR2012 and JPMJPR22CA) and Japan Society for the Promotion of Science (Grant\# 24K23861 and 22K17873).


%
%

\bibliographystyle{splncs04}
\bibliography{sample}

\begin{thebibliography}{10}
\providecommand{\url}[1]{\texttt{#1}}
\providecommand{\urlprefix}{URL }
\providecommand{\doi}[1]{https://doi.org/#1}

\bibitem{abate_fossil_2021}
Abate, A., Ahmed, D., Edwards, A., Giacobbe, M., Peruffo, A.: {FOSSIL}: a software tool for the formal synthesis of lyapunov functions and barrier certificates using neural networks. In: Proc. 24th Intl. Conf. {Hybrid} {Systems}: {Computation} and {Control}. pp. 1--11. {HSCC} '21, ACM, New York, NY, USA (May 2021). \doi{10.1145/3447928.3456646}

\bibitem{abate_formal_2021}
Abate, A., Ahmed, D., Giacobbe, M., Peruffo, A.: Formal {Synthesis} of {Lyapunov} {Neural} {Networks}. IEEE Control Systems Letters  \textbf{5}(3),  773--778 (Jul 2021). \doi{10.1109/LCSYS.2020.3005328}

\bibitem{ahmed_automated_2020}
Ahmed, D., Peruffo, A., Abate, A.: Automated and {Sound} {Synthesis} of {Lyapunov} {Functions} with {SMT} {Solvers}. In: Proc. 26th Intl. Conf. Tools and {Algorithms} for the {Construction} and {Analysis} of {Systems}. pp. 97--114. {TACAS} '20, Springer, Cham (2020). \doi{10.1007/978-3-030-45190-5_6}

\bibitem{atkinson_cutting_1995}
Atkinson, D.S., Vaidya, P.M.: A cutting plane algorithm for convex programming that uses analytic centers. Mathematical Programming  \textbf{69}(1-3),  1--43 (Jul 1995). \doi{10.1007/BF01585551}

\bibitem{berger_learning_2022}
Berger, G.O., Sankaranarayanan, S.: Learning fixed-complexity polyhedral {Lyapunov} functions from counterexamples. In: 2022 {IEEE} 61st {Conf.} {Decision} and {Control}. pp. 3250--3255. {CDC} '22, IEEE, New York, NY, USA (Dec 2022). \doi{10.1109/CDC51059.2022.9992338}

\bibitem{berger_counterexample-guided_2023}
Berger, G.O., Sankaranarayanan, S.: Counterexample-guided computation of polyhedral {Lyapunov} functions for piecewise linear systems. Automatica  \textbf{155},  111165 (Sep 2023). \doi{10.1016/j.automatica.2023.111165}

\bibitem{bertsekas_convex_2009}
Bertsekas, D.P.: Convex Optimization Theory. Athena Scientific, Belmont, MA (2009)

\bibitem{bobiti_delta-sampling_2015}
Bobiti, R., Lazar, M.: A delta-sampling verification theorem for discrete-time, possibly discontinuous systems. In: Proc. 18th {Intl.} {Conf.} {Hybrid} {Systems}: {Computation} and {Control}. pp. 140--148. {HSCC} '15, ACM, New York, NY, USA (Apr 2015). \doi{10.1145/2728606.2728631}

\bibitem{bobiti_automated-sampling-based_2018}
Bobiti, R., Lazar, M.: Automated-{Sampling}-{Based} {Stability} {Verification} and {DOA} {Estimation} for {Nonlinear} {Systems}. IEEE Trans. Automat. Contr.  \textbf{63}(11),  3659--3674 (Nov 2018). \doi{10.1109/TAC.2018.2797196}

\bibitem{chang_neural_2019}
Chang, Y.C., Roohi, N., Gao, S.: Neural lyapunov control. In: Proc. 33rd {Intl.} {Conf.} {Neural} {Information} {Processing} {Systems}. NeuRIPS '19, Curran Associates Inc., Red Hook, NY, USA (2019)

\bibitem{chen_learning_2021-1}
Chen, S., Fazlyab, M., Morari, M., Pappas, G.J., Preciado, V.M.: Learning lyapunov functions for hybrid systems. In: Proc. 24th Intl. Conf. {Hybrid} {Systems}: {Computation} and {Control}. pp. 1--11. {HSCC} '21, ACM, New York, NY, USA (2021). \doi{10.1145/3447928.3456644}

\bibitem{chen_learning_2021}
Chen, S., Fazlyab, M., Morari, M., Pappas, G.J., Preciado, V.M.: Learning {Region} of {Attraction} for {Nonlinear} {Systems}. In: 2021 60th {IEEE} {Conf.} {Decision} and {Control}. pp. 6477--6484. {CDC} '21, IEEE, New York, NY, USA (Dec 2021). \doi{10.1109/CDC45484.2021.9682880}

\bibitem{chesi_rational_2013}
Chesi, G.: Rational {Lyapunov} functions for estimating and controlling the robust domain of attraction. Automatica  \textbf{49}(4),  1051--1057 (Apr 2013). \doi{10.1016/j.automatica.2013.01.032}

\bibitem{danzer_hellys_1963}
Danzer, L., Grünbaum, B., Klee, V.: Helly's {Theorem} and {Its} {Relatives}. In: Convexity, Proc. Sympos. Pure Math., vol.~7, pp. 101--180. American Mathematical Society, Providence, RI, USA (1963). \doi{10.1090/pspum/007}

\bibitem{dawson_safe_2023}
Dawson, C., Gao, S., Fan, C.: Safe {Control} {With} {Learned} {Certificates}: {A} {Survey} of {Neural} {Lyapunov}, {Barrier}, and {Contraction} {Methods} for {Robotics} and {Control}. IEEE Trans. Robotics  \textbf{39}(3),  1749--1767 (Jun 2023). \doi{10.1109/TRO.2022.3232542}

\bibitem{diamond_cvxpy_2016}
Diamond, S., Boyd, S.: {CVXPY}: {A} {Python}-embedded modeling language for convex optimization. J. Mach. Learn. Res.  \textbf{17}(1),  2909--2913 (2016)

\bibitem{gao_numerically-robust_2019}
Gao, S., Kapinski, J., Deshmukh, J., Roohi, N., Solar-Lezama, A., Arechiga, N., Kong, S.: Numerically-{Robust} {Inductive} {Proof} {Rules} for {Continuous} {Dynamical} {Systems}. In: Proc. 31st Intl. Conf. Computer {Aided} {Verification}. pp. 137--154. {CAV} '19, Springer International Publishing, Cham (2019). \doi{10.1007/978-3-030-25543-5_9}

\bibitem{gao_dreal_2013}
Gao, S., Kong, S., Clarke, E.M.: {dReal}: {An} {SMT} {Solver} for {Nonlinear} {Theories} over the {Reals}. In: Proc. 24th {Intl}. {Conf}. {Automated} {Deduction}. pp. 208--214. {CADE} '13, Springer Berlin Heidelberg, Berlin, Heidelberg (2013). \doi{10.1007/978-3-642-38574-2_14}

\bibitem{goffin_complexity_1996}
Goffin, J.L., Luo, Z.Q., Ye, Y.: Complexity {Analysis} of an {Interior} {Cutting} {Plane} {Method} for {Convex} {Feasibility} {Problems}. SIAM J. Optim.  \textbf{6}(3),  638--652 (Aug 1996). \doi{10.1137/s1052623493258635}

\bibitem{hafstein_review_2015}
Hafstein, S., Giesl, P.: Review on computational methods for {Lyapunov} functions. Discrete and Continuous Dynamical Systems - B  \textbf{20}(8),  2291--2331 (Aug 2015). \doi{10.3934/dcdsb.2015.20.2291}

\bibitem{hoffmann_autonomous_2007}
Hoffmann, G.M., Tomlin, C.J., Montemerlo, M., Thrun, S.: Autonomous {Automobile} {Trajectory} {Tracking} for {Off}-{Road} {Driving}: {Controller} {Design}, {Experimental} {Validation} and {Racing}. In: Proc. 2007 {American} {Control} {Conf.} pp. 2296--2301. {ACC} '07, IEEE, New York, NY, USA (Jul 2007). \doi{10.1109/ACC.2007.4282788}

\bibitem{huang_sample_2023}
Huang, J.W., Roberts, S.J., Calliess, J.P.: On the {Sample} {Complexity} of {Lipschitz} {Constant} {Estimation}. Trans. Machine Learning Research  (Jun 2023)

\bibitem{jiang_improved_2020}
Jiang, H., Lee, Y.T., Song, Z., Wong, S.C.w.: An improved cutting plane method for convex optimization, convex-concave games, and its applications. In: Proc. 52nd Annu. {ACM} {SIGACT} Sympos. {Theory} of {Computing}. pp. 944--953. STOC '20, ACM, New York, NY, USA (Jun 2020). \doi{10.1145/3357713.3384284}

\bibitem{kapinski_discovering_2013}
Kapinski, J., Deshmukh, J.: Discovering {Forward} {Invariant} {Sets} for {Nonlinear} {Dynamical} {Systems}. In: Interdisciplinary {Topics} in {Applied} {Mathematics}, {Modeling} and {Computational} {Science}. {AMMCS} '13, vol.~117, pp. 259--264. Springer International Publishing, Cham (2013). \doi{10.1007/978-3-319-12307-3_37}

\bibitem{khalil_nonlinear_2002}
Khalil, H.K.: Nonlinear systems. Prentice Hall, Upper Saddle River, NJ, USA, 3rd edn. (2002)

\bibitem{masti_counter-example_2023}
Masti, D., Fabiani, F., Gnecco, G., Bemporad, A.: Counter-{Example} {Guided} {Inductive} {Synthesis} of {Control} {Lyapunov} {Functions} for {Uncertain} {Systems}. IEEE Control Syst. Lett.  \textbf{7},  2047--2052 (2023). \doi{10.1109/LCSYS.2023.3285102}

\bibitem{ravanbakhsh_counter-example_2015}
Ravanbakhsh, H., Sankaranarayanan, S.: Counter-{Example} {Guided} {Synthesis} of control {Lyapunov} functions for switched systems. In: 2015 54th {IEEE} {Conf.} {Decision} and {Control}. pp. 4232--4239. {CDC} '15, IEEE, New York, NY, USA (Dec 2015). \doi{10.1109/CDC.2015.7402879}

\bibitem{ravanbakhsh_counterexample-guided_2015}
Ravanbakhsh, H., Sankaranarayanan, S.: Counterexample-guided stabilization of switched systems using control lyapunov functions. In: Proc. 18th {Intl.} {Conf.} {Hybrid} {Systems}: {Computation} and {Control}. pp. 297--298. {HSCC} '15, ACM, New York, NY, USA (2015). \doi{10.1145/2728606.2728647}

\bibitem{ravanbakhsh_robust_2016}
Ravanbakhsh, H., Sankaranarayanan, S.: Robust controller synthesis of switched systems using counterexample guided framework. In: Proc. 13th {Intl.} {Conf.} {Embedded} {Software}. pp. 1--10. ACM, Pittsburgh Pennsylvania (Oct 2016). \doi{10.1145/2968478.2968485}

\bibitem{ravanbakhsh_learning_2019}
Ravanbakhsh, H., Sankaranarayanan, S.: Learning control lyapunov functions from counterexamples and demonstrations. Auton Robot  \textbf{43}(2),  275--307 (Feb 2019). \doi{10.1007/s10514-018-9791-9}

\bibitem{samanipour_automated_2023}
Samanipour, P., Poonawala, H.A.: Automated {Stability} {Analysis} of {Piecewise} {Affine} {Dynamics} {Using} {Vertices}. In: 2023 59th Annu. {Allerton} {Conf.} {Communication}, {Control}, and {Computing}. pp.~1--8. Allerton '23, IEEE, New York, NY, USA (Sep 2023). \doi{10.1109/Allerton58177.2023.10313502}

\bibitem{solar-lezama_program_2008}
Solar-Lezama, A.: Program synthesis by sketching. phd, University of California at Berkeley, USA (2008)

\bibitem{toth_handbook_2017}
Tóth, C., O'Rourke, J., Goodman, J.E. (eds.): Handbook of Discrete and Computational Geometry. CRC Press, New York, NY, USA, 3rd edn. (2017). \doi{10.1201/9781315119601}

\bibitem{Wood1996EstimationOT}
Wood, G.R., Zhang, B.P.: Estimation of the lipschitz constant of a function. Journal of Global Optimization  \textbf{8},  91--103 (1996)

\bibitem{zhang_learning_2024}
Zhang, S., Fan, C.: Learning to stabilize high-dimensional unknown systems using {Lyapunov}-guided exploration. In: Proc. 6th {Annual} {Learning} for {Dynamics} \& {Control} {Conf.} Proc. {Machine} {Learning} {Research}, vol.~242, pp. 52--67. JMLR, Cambridge, MA (Jul 2024)

\bibitem{zhou_neural_2022}
Zhou, R., Quartz, T., De~Sterck, H., Liu, J.: Neural {Lyapunov} {Control} of {Unknown} {Nonlinear} {Systems} with {Stability} {Guarantees}. Advances in Neural Information Processing Systems  \textbf{35},  29113--29125 (Dec 2022)

\end{thebibliography}

\appendix

\section{Acronyms and Symbols}\label{appx:symbols}
We provide acronyms in \cref{tab:acronyms}, symbols for describing the dynamical systems in \cref{tab:symbols}, and symbols for the hypothesis space for the learner in \cref{tab:sym-learning}.

\begin{table}[ht]
    \centering
    \caption{Acronyms}\label{tab:acronyms}
\begin{tabular}{ll}
     \ACCPM & Analytic Center Cutting-Plane Method \\
     BOA    & Basin of Attraction \\
     \CEGIS & Counter-Example Guided Inductive Synthesis \\
     ODE    & Ordinary Differential Equation \\
     NN     & Neural Network \\
     \ROI   & Region of Interest \\
     SMT    & Satisfiability Modulo Theory
\end{tabular}
\end{table}

\begin{table}[ht]
    \centering
    \caption{Symbols for Dynamical Systems}\label{tab:symbols}
    \begin{tabular}{r@{\hskip2pt}c@{\hskip2pt}l@{\hskip4pt}l}
    \hline
           $n$ & $\in$     & $\mathbb{N}$ & State dimensions \\
     $\cdot\;$ & $:$       & $\Reals^n \times \Reals^n \to \Reals$ & Inner product of two vectors \\
   $\norm{\ }$ & $:$       & $\Reals^n \to \Reals_{\geq 0}$ & Euclidean norm of a vector \\
        $\fbb$ & $:$       & $\Reals^n \to \Reals^n $ & Black-box right-hand side of ODE \\
        $\lya$ & $:$       & $\Reals^n \to \Reals$ & Differentiable Lyapunov candidate \\
    $\gradlya$ & $:$       & $\Reals^n \to \Reals^n$ & Gradient vector of the function $\lya$ \\
        $\Dom$ &$\subseteq$& $\Reals^n$ & Domain of states surrounding origin \\
     $\States$ &$\subseteq$& $\Dom \setminus \{\zero\}$ & Region of interest excluding origin\\
         $\Ri$ & $\subset$ & $\Dom$     & A region in the domain \\
$\lipbb, \lipreg$ & $\in$     & $\PosReals$ & Lipschitz bounds for $\fbb$ in  $\Dom$ or $\Reg \subseteq \Dom$\\
    $\vx, \xj$ & $\in$     & $\Reals^n$ & Any state $\vx$, a sampled state $\xj$ \\
$\yj=\fbb(\xj)$& $\in$       & $\Reals^n$ & The output for a sampled state $\xj$ \\
$\Ball_r(\vx)$ &$\subseteq$& $\Reals^n$  & A closed $n$-ball of radius $r$ around $\vx$ \\
    \hline
    \end{tabular}
\end{table}

\begin{table}[ht]
    \centering
    \caption{Symbols for Hypothesis Spaces}\label{tab:sym-learning}
    \begin{tabular}{r@{\hskip2pt}c@{\hskip2pt}l@{\hskip4pt}l}
    \hline
          $d$ & $\in$     & $\mathbb{N}$ & Parameter dimensions \\
     $\vpara$ & $\in$     & $\Reals^d$   & A parameter vector \\
    $\HypsPD$ &$\subseteq$& $\Reals^d$   & Positive definite candidates \\
   $\HypsSTB$ &$\subseteq$& $\Reals^d$   & Candidates decreasing along trajectories \\
    $\Hyps_S$ &$\subseteq$& $\Reals^d$   & Candidates compatible with samples in $S$\\
    \hline
    \end{tabular}
\end{table}

\section{Complete Proofs}\label{appx:proofs}

In this section, we provide complete proofs for the theorems shown in this paper.
We start with \cref{lem:lie-ub} which is an important intermediate result that is used to prove \cref{prop:bbox-regional} and \cref{thm:eps-equiv}, \ref{thm:grid-vc}, and \ref{thm:provable-convex}.

\begin{lemma}\label{lem:lie-ub}
For any two states $\vx, \vx_i \in \Dom$ and the sampled output $\vy_i = \fbb(\vx_i)$,
we have
\[
    |\liederlya - \derlya \cdot\vy_i| \leq \norm{\derlya}\lipbb\norm{\vx - \vx_i}
\]
Further, if $\vx \in \Ball_\dist(\vx_i)$, then we have
\[
   \norm{\derlya}\lipbb\norm{\vx - \vx_i} \leq M \lipbb \dist
\]
where $\lipbb$ is a Lipschitz bound for $\fbb$ in $\Dom$,
and $M \geq \sup\limits_{\vx\in \States}\norm{\derlya}$ is an upper bound on the norm of the gradient.
\end{lemma}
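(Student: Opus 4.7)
The plan is to prove the lemma directly by algebraic manipulation, followed by two standard inequalities (Cauchy--Schwarz and Lipschitz continuity), and then close with trivial bounds for the second inequality. Since $\vy_i = \fbb(\vx_i)$ by hypothesis, I would first rewrite the quantity inside the absolute value as
\[
\liederlya - \derlya \cdot \vy_i
= \derlya \cdot \bigl(\fbb(\vx) - \fbb(\vx_i)\bigr),
\]
using bilinearity of the inner product and the substitution $\vy_i = \fbb(\vx_i)$. This is the only non-routine step in the sense that it identifies the structure that makes the remaining bounds applicable: a dot product of the gradient of $\lya$ with a finite difference of $\fbb$.

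Next I would apply the Cauchy--Schwarz inequality to obtain
\[
\bigl| \derlya \cdot \bigl(\fbb(\vx) - \fbb(\vx_i)\bigr) \bigr|
\leq \norm{\derlya} \cdot \norm{\fbb(\vx) - \fbb(\vx_i)}.
\]
Then, since $\vx, \vx_i \in \Dom$ and $\lipbb$ is a Lipschitz bound of $\fbb$ in $\Dom$, the second factor satisfies $\norm{\fbb(\vx) - \fbb(\vx_i)} \leq \lipbb \norm{\vx - \vx_i}$, which chained with the previous line gives the first claim.

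For the second inequality, I would use the two provided assumptions directly: $\vx \in \Ball_\dist(\vx_i)$ means $\norm{\vx - \vx_i} \leq \dist$ by definition of the closed ball, and $\norm{\derlya} \leq M$ holds whenever $\vx$ lies in $\States$ (the regime in which the lemma is actually applied, e.g.\ in Definition~\ref{def:provability} and the proofs of \cref{thm:eps-equiv,thm:grid-vc,thm:provable-convex}). Multiplying these two bounds with the nonnegative factor $\lipbb$ preserves the inequality and yields $\norm{\derlya} \lipbb \norm{\vx - \vx_i} \leq M \lipbb \dist$.

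There is no substantive obstacle here; the lemma is essentially a packaging of Cauchy--Schwarz plus Lipschitz continuity in a form convenient for later use. The only mild subtlety worth noting is that the factor $\norm{\derlya}$ is bounded by $M$ only when $\vx \in \States$, so I would make this assumption explicit (or note that all invocations of the lemma satisfy it) when stating the second inequality, to avoid any ambiguity about the scope of the bound $M$.
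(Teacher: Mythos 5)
Your proof is correct and follows essentially the same route as the paper's: rewrite the difference as $\derlya\cdot(\fbb(\vx)-\fbb(\vx_i))$, apply Cauchy--Schwarz, then Lipschitz continuity of $\fbb$, and close the second inequality with $\norm{\vx-\vx_i}\le\dist$ and $\norm{\derlya}\le M$. If anything, your handling of the second inequality is cleaner than the paper's, whose displayed chain still carries leftover controller terms ($\ctrl$, $\vu_i$, $\lipkk$) from an open-loop formulation and ends with the bound $M\lipbb(1+\lipkk)\dist$ rather than the stated $M\lipbb\dist$; your observation that $M$ bounds $\norm{\derlya}$ only for $\vx\in\States$ is also a legitimate point about the scope of the lemma as stated.
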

\begin{proof}
Here we apply Cauchy-Schwarz inequality\footnote{For norms other than the Euclidean norm, we can apply H\"{o}lder's inequality instead of Cauchy-Schwarz inequality.} and Lipschitz continuity.
\begin{align*}
    & |\liederlya - \derlya \cdot \vy_i| \\
   =& |\derlya \cdot (\fbb(\vx) - \fbb(\vx_i))| \\
\leq& \norm{\derlya}\norm{\fbb(\vx) - \fbb(\vx_i)} \qquad \text{(Cauchy-Schwarz)} \\
\leq& \norm{\derlya}\lipbb\norm{\vx - \vx_i} \qquad (\fbb \text{ is Lipschitz continuous})
\end{align*}
This proves the first inequality.
Further, $M \geq \norm{\derlya}$ by definition,
and $\norm{\vx - \vx_i}$ is bounded by $\dist$ because $\vx \in \Ball_\dist(\vx_i)$.
\begin{align*}
     & \norm{\vx - \vx_i} &\\
\leq & \norm{\vx - \vx_i} + \norm{\ctrl(\vx) - \vu_i} & \text{(Triangle Inequality)}\\
\leq & \norm{\vx - \vx_i} + \lipkk\norm{\vx - \vx_i}  & (\ctrl \text{ is Lipschitz cont.})\\
\leq & \dist                        & (\vx \in \Ball_\dist(\vx_i))
\end{align*}
Therefore,
\(
\norm{\derlya}\lipbb\norm{\vx - \vx_i} \leq M \lipbb (1+\lipkk) \dist
\)
\end{proof}

\subsection{Proof for \cref{thm:eps-equiv}}\label{appx:thm:eps-equiv}
\begin{proof}
First, we show that Condition~\eqref{cond:eps-cover} implies Condition~\eqref{cond:lya-stab}.
By simply applying Lemma~\ref{lem:lie-ub},
we have:
\begin{math}
\liederlya - \derlya \cdot \vy_i \leq M \lipbb \dist
\end{math}.
Adding both sides with Condition~\eqref{cond:eps-cover}, we derive
$\liederlya < -M \lipbb \dist < 0$.
The above holds for all $\dist$-balls around all samples $\vx_i$, which cover the entire $\States$.

We prove the other direction by contradiction.
We will show that Condition~\eqref{cond:lya-stab} contradicts the negation of Condition~\eqref{cond:eps-cover}.
Because $\States$ is compact and $\gradlya$ and $\fbb$ are continuous,
there exists $\beta > 0$ for Condition~\eqref{cond:lya-stab} so that 
\begin{equation}\label{cond:beta}
\forall \vx\in\States, \liederlya < -\beta   
\end{equation}
By assumption, there exists no $\dist$-cover to prove Condition~\eqref{cond:eps-cover}.
We arbitrarily choose a $\dist > 0$ satisfying
\(
{3M\lipbb} \dist \leq \beta
\),
and there exist two states $\vx_i \in \Dom$ and $\vx\in \Ball_\dist(\vx_i) \cap \States$ to falsify Condition~\eqref{cond:eps-cover}:
\begin{small}
\[
\begin{array}{rrl}
\derlya \cdot \vy_i \geq -2M\lipbb\dist
\Rightarrow &   \derlya \cdot \vy_i -M\lipbb\dist &\geq -3M\lipbb\dist \\
\Rightarrow &    \derlya \cdot\vy_i -M\lipbb\dist &\geq -\beta
\end{array}
\]
\end{small}%
We use $\derlya \cdot (\vy_i - \fbb(\vx)) \leq M\lipbb\dist$ from Lemma~\ref{lem:lie-ub}:
\begin{small}
\[
\derlya \cdot \vy_i - \derlya \cdot (\vy_i - \fbb(\vx)) \geq -\beta
\Rightarrow \derlya \cdot \fbb(\vx) \geq -\beta
\]
\end{small}%
This contradicts Condition~\eqref{cond:beta}.
By contradiction, Condition~\eqref{cond:lya-stab} implies Condition~\eqref{cond:eps-cover}.
\end{proof}
\begin{remark}
Note that Definition~\ref{def:provability} and Theorem~\ref{thm:eps-equiv} depend on the continuously differentiable $\lya$ so that the gradient $\gradlya$ is continuous in $\States$,
and hence there exists a bound $-\beta$ for the Lie derivative in the compact region $\States$.
\end{remark}

\subsection{Proof for \cref{prop:bbox-regional}}\label{appx:prop:bbox-regional}
\begin{proof}
Recall that we use $\derlya \cdot \yj$ to approximate $\liederlya$.
Let $\diff_{\xj}(\vx)$ denote \emph{the signed error} for an unobserved state $\vx$,
we first show the upper bound on $\diff_{\xj}(\vx)$ by Lemma~\ref{lem:lie-ub}:
\begin{align*}
\diff_{\xj}(\vx) &= \liederlya - \derlya \cdot \fbb(\xj) \\
           &\leq | \liederlya - \derlya \cdot \fbb(\xj) |
            \leq \norm{\derlya}\lipreg\norm{\vx - \xj}
\end{align*}
We then derive an upper bound of $\derlya \cdot \fbb(\vx)$ as below:
\begin{align*}
\liederlya 
     =\ & \diff_{\xj}(\vx) + \derlya \cdot \fbb(\xj) \\
  \leq\ & \norm{\derlya}\lipreg\norm{\vx - \xj} + \derlya \cdot \yj
        = \LieUB_{\xj, \yj}(\vx)
\end{align*}
Notice that each sample $(\xj,\yj)$ in $S$ leads to one upper bound for the state $\vx$,
and any upper bound below 0 is sufficient as described in Condition~\eqref{cond:regional-lya-bb}.
\end{proof}
A notable feature of the regional verification condition is that we may use multiple samples in $\Reg$.
We argue that the nearest sample $\xj$ of an unobserved state $\vx$ may not provide the tightest upper bound.
This is partly due to applying Cauchy-Schwarz inequality in Lemma~\ref{lem:lie-ub} to bound $\diff_\xj(\vx)$.
Especially when $\diff_\xj(\vx) < 0$, the upper bound can be loose even when $\norm{\vx - \xj}$ is small
because the norm is always non-negative.
Hence, checking multiple samples may actually prove Condition~\eqref{cond:regional-lya-bb} more easily.

\subsection{Proof for \cref{thm:grid-vc}}\label{appx:thm:grid-vc}
\begin{proof}
We first prove the ``if'' direction.
By \cref{prop:lya-regional} and~\ref{prop:bbox-regional}, we know that a regionally decreasing function $\lya$ for all regions $\Ri$ must be decreasing along all trajectories in $\States$.
Hence, $\lya$ is $\dist$-provably decreasing for some $\dist$ due to Theorem~\ref{thm:eps-equiv}.

We now prove the ``only if'' direction by proving that Condition~\eqref{cond:eps-cover} implies Condition~\eqref{cond:regional-lya-bb}.
Let $\{\vx_i\}_{i=1\dotsc N}$ be the $\dist$-cover for Condition~\eqref{cond:eps-cover},
we construct a cover $\Cover=\{\Reg_{i}\}$ by setting each region $\Reg_i = \Ball_\dist(\vx_i)$ and use the most conservative Lipschitz bound for $\Reg_i$, i.e., $\lipbb_{\Reg_i} = \lipbb$.
We choose a singleton set $S_i = \{(\vx_i, \vy_i)\}$ using the center $\vx_i \in \Reg_i$.
By Lemma~\ref{lem:lie-ub},
\(
-M\lipbb\dist \leq -\norm{\derlya} \lipbb \norm{\vx - \vx_i}
\).
By Condition~\eqref{cond:eps-cover}, we know for all $\vx \in \Ball_\dist(\vx_i) \cap \States$:
\begin{small}
\[
\begin{array}{rrl}
            \derlya \cdot \vy_i < -2M\lipbb\dist
\Rightarrow & \derlya \cdot \vy_i &< -2\norm{\derlya} \lipbb \norm{\vx - \vx_i} \\
\Rightarrow & \LieUB_{\vx_i, \vy_i}(\vx)  &< -\norm{\derlya} \lipbb \norm{\vx - \vx_i} \leq 0
\end{array}
\]
\end{small}%
This is exactly Condition~\eqref{cond:regional-lya-bb} if we use only one sample.
\end{proof}

\subsection{Distance Bound to the Nearest Vertex in a Convex Hull}
\begin{lemma}\label{lem:convex-hull-cover}
Let $\mathcal{V} = \{\vx_1, \vx_2, \dotsc, \vx_N\} \subset \Reals^n$ be a set of points and
let $\Reg=\mathrm{conv}(\mathcal{V})$ be the convex hull of $\mathcal{V}$.
W.o.l.g,
if $\Reg$ is covered by an $r$-ball centered at the origin, i.e.,
$\Reg \subseteq \Ball_r(\zero)$,
then $\Reg$ is also covered by the union of $r$-balls centered at points in $\mathcal{V}$, i.e.,
\(
\Reg \subseteq \bigcup_{\vx_i \in \mathcal{V}} \Ball_r(\vx_i)
\).
\end{lemma}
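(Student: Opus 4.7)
The plan is to reduce to the case of a ball centered at the origin (this is exactly the ``W.o.l.g.'' reduction: translating every point by a common vector preserves Euclidean distances and the convex-hull structure), and then to establish pointwise that any $\vx \in \Reg$ lies in $\Ball_r(\vx_i)$ for some vertex $\vx_i$. The key trick will be a weighted-average argument on \emph{squared} distances using the barycentric coordinates of $\vx$.

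First, since $\vx \in \Reg = \mathrm{conv}(\mathcal{V})$, I would write $\vx = \sum_{i=1}^N \lambda_i \vx_i$ with $\lambda_i \geq 0$ and $\sum_i \lambda_i = 1$. Second, I would compute the $\lambda$-weighted average of $\norm{\vx - \vx_i}^2$ and simplify using the identity $\sum_i \lambda_i \vx_i = \vx$:
\begin{align*}
\sum_{i=1}^N \lambda_i \norm{\vx - \vx_i}^2
 &= \sum_{i=1}^N \lambda_i \bigl(\norm{\vx}^2 - 2\,\vx\cdot\vx_i + \norm{\vx_i}^2\bigr) \\
 &= \norm{\vx}^2 - 2\,\vx \cdot \Bigl(\sum_{i=1}^N \lambda_i \vx_i\Bigr) + \sum_{i=1}^N \lambda_i \norm{\vx_i}^2 \\
 &= -\norm{\vx}^2 + \sum_{i=1}^N \lambda_i \norm{\vx_i}^2.
\end{align*}

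Third, because each vertex satisfies $\vx_i \in \Reg \subseteq \Ball_r(\zero)$, we have $\norm{\vx_i}^2 \leq r^2$, so the convex combination on the right is at most $r^2$; dropping the nonpositive term $-\norm{\vx}^2$ yields
\[
\sum_{i=1}^N \lambda_i \norm{\vx - \vx_i}^2 \;\leq\; r^2 - \norm{\vx}^2 \;\leq\; r^2.
\]
A weighted average of nonnegative numbers with weights summing to one that is bounded by $r^2$ must have at least one term bounded by $r^2$, so there exists $i^\star$ with $\lambda_{i^\star} > 0$ and $\norm{\vx - \vx_{i^\star}} \leq r$, i.e.\ $\vx \in \Ball_r(\vx_{i^\star})$. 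Since $\vx \in \Reg$ was arbitrary, this gives the desired cover.

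The only real step of substance is recognizing the barycentric identity in the second display (a mild variant of the parallel-axis theorem); after that the argument is purely algebraic, so I do not foresee a genuine obstacle. One minor care-point is making sure one averages \emph{squared} distances rather than distances themselves, as only the squared version admits the clean cancellation above.
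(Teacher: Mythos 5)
Your proof is correct and follows essentially the same route as the paper's: the barycentric expansion of $\vx$, the $\lambda$-weighted average of squared distances, the cancellation via $\sum_i \lambda_i \vx_i = \vx$, and the bound $\norm{\vx_i}^2 \leq r^2$. The only cosmetic difference is the final step, where you invoke the pigeonhole property of weighted averages while the paper bounds the distance to the nearest vertex directly; these are interchangeable.
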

\begin{proof}
Based on \url{https://math.stackexchange.com/q/4203164},
we provide a proof in our notations.
By assumption, we know $\vx_i \in \Ball_r(\zero)$ for each point $\vx_i \in \mathcal{V}$.
For any point $\vx$ in the convex hull $\Reg$,
we can find $\vx = \sum_{i=1}^N \lambda_i \vx_i$ with $\lambda_i \geq 0$ and $\sum_{i=1}^N \lambda_i = 1$.
The squared distance from $\vx$ to any point $\vx_i \in \mathcal{V}$ is:
\begin{align*}
\norm{\vx - \vx_i}^2 = \norm{\vx}^2 - 2\vx \cdot \vx_i + \norm{\vx_i}^2
\end{align*}
The weighted mean squared distance using $\lambda_i$ as weights is:
\begin{align*}
\sum_{i=1}^N \lambda_i\norm{\vx - \vx_i}^2
   =& \norm{\vx}^2 - 2\vx\cdot\sum_{i=1}^N \lambda_i \vx_i + \sum_{i=1}^N \lambda_i \norm{\vx_i}^2 \\
   =& \norm{\vx}^2 - 2\norm{\vx}^2 + \sum_{i=1}^N \lambda_i \norm{\vx_i}^2 \\
\leq& -\norm{\vx}^2 + r^2 \qquad (\because \vx_i \in \Ball_r(\zero) \therefore \norm{\vx_i}^2 \leq r^2) \\
\leq& r^2
\end{align*}
In addition, let $\mathrm{nr}(\vx) = \mathrm{arg}\min_{\vx_i \in \mathcal{V}}\{\norm{\vx-\vx_i}\}$ returns the nearest point in $\mathcal{V}$ for any $\vx\in\Reg$.
Because all weights $\lambda_i\geq 0$ and $\sum_{i=1}^N \lambda_i = 1$,
we know
\[
\norm{\vx-\mathrm{nr}(\vx)}^2 = \sum_{i=1}^N \lambda_i\norm{\vx-\mathrm{nr}(\vx)}^2 \leq \sum_{i=1}^N \lambda_i\norm{\vx - \vx_i}^2 \leq r^2
\]
This implies that every $\vx \in \Reg$ must be covered by the $r$-ball around $\mathrm{nr}(\vx)\in\mathcal{V}$ that is nearest to $\vx$,
so $\Reg \subseteq \bigcup\limits_{\vx_i \in \mathcal{V}} \Ball_r(\vx_i)$.
\end{proof}

\subsection{Proof for \cref{thm:provable-convex}}\label{appx:thm:provable-convex}
\begin{proof}
First, because $\lya$ is $\dist$-provably decreasing,
we can find a sampled state $\vx_i \in\Dom$ for any state $\vx \in \Reg\cap\States$ so that 
$\norm{\vx_i - \vx}\leq \dist$ and
\(
\derlya \cdot \vy_i < -2M\lipbb\dist
\).

Second, by Jung's theorem~\cite[Theorem 2.6]{danzer_hellys_1963},
our requirement on the diameter of $\Reg$ ensures the existence of a ball enclosing $\Reg$ with radius $\frac{\dist}{2}$.
In combination with \cref{lem:convex-hull-cover}, the distance from any state $\vx \in \Reg$ to the nearest $\xj \in \mathcal{V}$ is bounded by $\frac{\dist}{2}$,
i.e., $\forall \vx\in\Reg, \min\limits_{\xj\in\mathcal{V}} \norm{\vx - \xj}\leq \frac{\dist}{2}$.
Hence, we know $\norm{\vx_i - \xj}\leq \norm{\vx_i - \vx} + \norm{\vx - \xj} \leq \frac{3}{2}\dist$.

We then derive the following from Condition~\eqref{cond:eps-cover}: 
\begin{small}
\[
                 \derlya \cdot \vy_i < -2M\lipbb\dist
\iff \derlya \cdot \vy_i + M\lipbb (\frac{3}{2}\dist + \frac{\dist}{2}) < 0
\]
\end{small}%
By \cref{lem:lie-ub},
$\derlya \cdot(\yj - \vy_i) \leq M \lipbb \frac{3}{2}\dist$.
Besides, $\norm{\derlya}\lipreg\norm{\vx - \xj} \leq M \lipbb \frac{\dist}{2}$.
We then derive for all $\vx\in \Reg\cap\States$:
\begin{small}
\[
\begin{aligned}
\Rightarrow && \derlya \cdot \vy_i + \derlya \cdot (\yj - \vy_i) + M\lipbb \frac{\dist}{2} & < 0 \\
\Rightarrow &&        \derlya \cdot \yj + \norm{\derlya}\lipreg\norm{\vx - \xj} & < 0
\end{aligned}
\]
\end{small}%
Therefore, $\lya$ is also regionally decreasing in $\Reg$ witness by $S$.
\end{proof}

\subsection{Equi-satisfiable SMT query for Regional Verification}\label{appx:equi-sat}
For a given region $\Reg$ and a set of samples $S=\{(\xj_j, \yj_j)\}_j$,
recall the original SMT query:
\begin{small}
\[
\exists \vx \in \Reg \cap \States,
\bigwedge_{(\xj_j,\yj_j)\in S} \norm{\derlya}\lipbb\norm{\vx-\xj_j} + \derlya \cdot \yj_j \geq 0
\]
\end{small}%

To remove the square root function in the norm,
we use a common technique of introducing auxiliary variables $\alpha$ and $r_j$ so that $0 \leq \alpha^2 \leq \norm{\derlya}^2$ and $0 \leq r_j^2 \leq \norm{\vx-\xj_j}^2$ for each $(\xj_j,\yj_j) \in S$.
We rewrite the above query as:
\begin{small}
\begin{multline*}
\exists \vx \in \Reg \cap \States, \exists \alpha\in\PosReals, \alpha^2 \leq \norm{\derlya}^2 \land\\
\bigwedge_{(\xj_j,\yj_j)\in S} \exists r_j \in\PosReals, r_j^2 \leq \norm{\vx-\xj_j}^2 \land \alpha\lipbb r_j + \derlya \cdot \yj_j \geq 0
\end{multline*}
\end{small}%
It is easy to show that the two queries are equi-satisfiable because the maximum values of $\alpha$ and $r_j$ is bounded by the respective norms,
and hence the existence of $\alpha$ and $r_j$ satisfying $\alpha\lipbb r_j + \derlya \cdot \yj_j \geq 0$ is \emph{equivalent} to the existence of $\vx$ satisfying $\norm{\derlya}\lipbb\norm{\vx-\xj_j} + \derlya \cdot \yj_j \geq 0$.
Further, the same satisfiable assignment of $\vx$ for one must satisfy the other SMT queries.

We assume the region $\Reg$ can be specified as a polytope by design.
The required background theory to solve the rewritten SMT query obviously depends on the gradient function $\derlya$.
If the Lyapunov candidate $\lya(\vx)$ is quadratic,
$\derlya$ is linear with respect to $\vx$.
The SMT query is a conjunction of quadratic constraints, which is equivalent to the feasibility of Quadratically Constrained Quadratic Programming~(QCQP) problem.
If the Lyapunov candidate $\lya(\vx)$ is a \emph{rational polynomial} of $\vx$~\cite{chesi_rational_2013},
we can simplify the denominator of $\derlya$,
and the SMT query is equivalent to the feasibility/emptiness of a \emph{basic semialgebraic set}.
The feasibility of a basic semialgebraic set is decidable.
It can be solved more efficiently if the set is convex, but it is NP-hard in the general case.

\section{Learn Quadratic Candidates}\label{appx:learner}
Recall the template $\lya_\vpara: \Reals^n \mapsto \Reals$ is as below:
\[
\lya_\vpara(\vx) = \frac{1}{2} \vx^T \Theta \vx
\]
More precisely, for a parameter $\vpara=[\theta_1\dotsc\theta_d]$,
the symmetric matrix $\Theta$ is constructed by assigning its entries:
\[
\Theta_{i,j} = \Theta_{j,i} = \theta_{((i-1)i/2 + j)} \qquad \text{for } i = 1\dotsc n, j = 1\dots i
\]
Further, we can derive the Lie derivative as:
\[
\gradlya_\vpara(\vx)\cdot \fbb(\vx) = (\Theta \vx) \cdot \fbb(\vx) = \vx^T \Theta \fbb(\vx)
\]
Following Section~\ref{subsec:learn}, a set of samples $S = \{(\xj_1, \yj_1)\dotsc (\xj_k, \yj_k)\}$ constrains $\Hyps_S$ by:
\[
\Hyps_S = \Hyps_0 \cap \left\{\vpara\in \Reals^d\ \Bigg|\ \bigwedge_{i=1}^k \frac{1}{2} \xj_i^T \Theta \xj_i \geq 0 \land \xj_i^T \Theta \yj_i \leq 0 \right\}
\]
We can find the analytical center $\vpara_{k+1} \in \Hyps_S$ by solving the following convex optimization problem:
\[
\vpara_{k+1} = \arg\max_{\vpara \in \Reals^d} 
\left(\begin{array}{c}
       \sum\limits_{i=1}^d \left(\ln(\frac{1}{2} + \theta_i) + \ln(\frac{1}{2} - \theta_i)\right) \\
       + \sum\limits_{i=1}^k \left(\ln(\frac{1}{2}\xj_i^T \Theta \xj_i) + \ln(-\xj_i^T \Theta \yj_i)\right)
\end{array} \right)
\]

\section{Details for Benchmarks}\label{appx:benchmarks}

\begin{table}[ht]
    \caption{Configurations for benchmarks.}\label{tab:exp-conf}
    \centering
    \begin{tabular}{l|c|c|r}
        Name     &$n$D& \ROI $\States$                 & $\lipbb$ \\
        \hline
        Van der Pol~\cite{zhou_neural_2022}
                 & 2D & $0.2 \leq \norm{\vx} \leq 1.2$ & 4.632 \\
        \hline
        Unicycle path~\cite{zhou_neural_2022}
                 & 2D & $0.1 \leq \norm{\vx} \leq 0.8$ & 62.171 \\
        \hline
        Inverted pendulum~\cite{zhou_neural_2022}
                 & 2D & $0.4 \leq \norm{\vx} \leq 4.0$ & 12752 \\
        \hline
        Stanley Controller~\cite{hoffmann_autonomous_2007}
                 & 2D & $10^{-3} \leq |e|\leq 2 \land 10^{-3} \leq|\psi|\leq\frac{\pi}{4}$      & 3.266
    \end{tabular}
    \vspace{5pt}

    \begin{tabular}{l|c|r|r|r}
                             &      & \multicolumn{3}{c}{Lipschitz bound $\lipbb$} \\
         \cline{3-5}
        Name                 & $n$D & $r=1$ & $r=5$ & $r=10$ \\
        \hline
        $\mathtt{nonpoly}_0$ &  2D  & 2.449 & 7.874 & 14.90\\
        $\mathtt{nonpoly}_1$ &  2D  & 5.477 & 112.7 & 448.1\\
        $\mathtt{nonpoly}_2$ &  3D  & 3.178 & 7.778 & 24.27\\
        $\mathtt{nonpoly}_3$ &  3D  & 3.564 & 40.61 & 317.2\\
    \end{tabular}
    \hspace{1em}
    \begin{tabular}{l|c|r|r|r}
                             &      & \multicolumn{3}{c}{Lipschitz bound $\lipbb$} \\
        \cline{3-5}         
        Name                 & $n$D & $r=1$ & $r=5$ & $r=10$ \\
        \hline
        $\mathtt{poly}_1$    &  3D  & 10.63 & 210.8 & 838.2 \\
        $\mathtt{poly}_2$    &  2D  & 3.464 & 75.02 & 300.0 \\
        $\mathtt{poly}_3$    &  2D  & 5.477 & 110.2 & 432.0 \\
        $\mathtt{poly}_4$    &  2D  & 7.382 & 3577  & 57063\\
    \end{tabular}
\end{table}

We report the configuration used for each benchmarks in \cref{tab:exp-conf}.
All benchmarks except for the Stanley Controller are differentiable.
We derive the regional Lipschitz bounds $\lipreg$ by first computing the Jacobian matrix $J_\fbb(\vx)$
and calculate the Frobenius norm $\norm{J_\fbb(\vx)}_F$.
For the domain $\Dom$ and a simplex region $\Reg$,
we further find an upper bound of the norm as the Lipschitz bound $\lipbb\geq\sup_{\vx\in\Dom} \norm{J_\fbb(\vx)}_F$ and the regional Lipschitz bound $\lipreg \geq \sup_{\vx\in\Reg} \norm{J_\fbb(\vx)}_F$.
Because the computation is not exact,
we may use an ever larger value for soundness.
We report the Lipschitz bound $\lipbb$ in the domain $\Dom$ in \cref{tab:exp-conf}.

\paragraph{Lipschitz Bound for Stanley Controller}
We consider the kinetic vehicle model from~\cite{hoffmann_autonomous_2007}.
We simplify the model with a constant velocity $v$ instead of $v(t)$ and denote the length of wheel base $l=a+b$.
We can manually derive a Lipschitz bound $\lipbb = \sqrt{(1+l^{-2})\cdot(v^2 + k^2)}$.
For our evaluation, we use $k = 0.45$, $l = 1.75$, and $v = 2.8$,
so $\lipbb\approx 3.266$.

\section{Comparison of Related Works}\label{appx:related}

\begin{table}[ht]
    \centering
    \caption{Comparison on components in \CEGIS of Lyapunov functions.}
    \label{tab:appx-related}
    \begin{footnotesize}
    \begin{tabular}{l|c|c|c|c|c|c}
     & Target    &         &          \\
     & Sys. Time & Learner & Verifier \\ \hline
\cite{ahmed_automated_2020,abate_formal_2021,abate_fossil_2021}
     & CT   & NN      & SMT      \\
\cite{chang_neural_2019}
     & CT   & NN      & SMT      \\
\cite{chen_learning_2021,chen_learning_2021-1}
     & DT   & SDP+CP  & MIQP     \\
\cite{berger_learning_2022,berger_counterexample-guided_2023}
     & CT   & CP      & LP       \\
\cite{masti_counter-example_2023}
     & DT   & SDP     & CP       \\
\cite{ravanbakhsh_counter-example_2015,ravanbakhsh_counterexample-guided_2015,ravanbakhsh_robust_2016,ravanbakhsh_learning_2019}
     & Both & LP      & SDP      \\
\cite{zhou_neural_2022}
     & CT   & NN      & SMT      \\
Ours & CT   & CP      & SMT      \\
    \end{tabular}
    \hspace{1em}
    \begin{tabular}{l}
    Acronyms: \\
    Continuous Time~(CT), \\
    Discrete Time~(DT), \\
    Convex Programming~(CP), \\
    Linear Programming~(LP), \\
    Semidefinite Programming~(SDP), \\
    Mixed Integer Quadratic Programming\\
    (MIQP), \\
    Neural Networks~(NN), \\
    Satisfiability Modulo Theories~(SMT)
    \end{tabular}
    \end{footnotesize}
\end{table}

\paragraph*{Choices on Approximations}
There are also different choices on what expressions in Condition~\eqref{cond:lya-stab} to approximate.
For instance, \cite{zhou_neural_2022} approximates the dynamics $\fbb$ with neural networks with error bound by universal approximation theory.
\cite{bobiti_automated-sampling-based_2018} computes a piecewise approximation of the Lie derivative $\liederlya$,
and it derives a condition using the approximation to determine when the Lie derivative must be negative definite.
\cite{masti_counter-example_2023} handles discrete time control affine systems with convergence.
\cite{samanipour_automated_2023} use Delaunay triangulation for piecewise affine dynamics.

\subsection{Comparison with~\cite{zhou_neural_2022}}\label{appx:subsec:compare-neurips2022}

We do not compare the computation time of the implementation of~\cite{zhou_neural_2022} with our approach because the comparison will be unfair under our setup and highly favors our approach.
This is based on the following three main reasons:
\begin{itemize}
\item The implementation of~\cite{zhou_neural_2022} obtains millions of evenly-spaced samples assuming a vectorized black-box function $\fbb$ for GPU acceleration. Under our setup, samples from a black-box system are obtained iteratively, and their approach suffers from the massive amount of samples.
\item The implementation of \cite{zhou_neural_2022} further synthesizes control Lyapunov functions and controllers that modify the behavior of the black-box system. Our approach focuses on certifying the stability and does not synthesize the controller.
\item Our hardware platform has better multi-core CPU for multiprocessing and less powerful GPU for Neural Network-based computation.
\end{itemize}

\end{document}